 \let\iclrfinalcopy\relax
\def\NAT@spacechar{~}
\newcommand{\delimit}[3]{\newcommand{#1}[1]{\left#2##1\right#3}}
\DeclareMathOperator*{\E}{\mathbb E}
\DeclareMathOperator*{\co}{co}
\let\op\operatorname
\let\eps\varepsilon
\let\oldboxed\boxed
\renewcommand{\boxed}[1]{\oldboxed{#1}\,}
\newcommand{\R}{\mathbb R}
\newcommand{\poly}{\op{poly}}
\newcommand{\ie}{{\em i.e.}\xspace}
\newcommand{\eg}{{\em e.g.}\xspace}
\newcommand{\Reg}{\op{Reg}}
\newcommand{\dPhi}[1]{\Phi_\textsc{#1}}
\let\vec\boldsymbol
\let\mat\mathbf
\newcommand{\Root}{{\varnothing}}
\let\tilde\widetilde
\let\hat\widehat
\let\ip\ev
\renewcommand\grad\nabla
\let\mc\mathcal
\let\cite\citep
\newtheorem{theorem}{Theorem}
\newtheorem*{theorem*}{Theorem}
\numberwithin{theorem}{section}
\newtheorem{lemma}[theorem]{Lemma}
\newtheorem{corollary}[theorem]{Corollary}
\theoremstyle{definition}
\newtheorem{definition}[theorem]{Definition}
\newcommand{\commentsymbol}{\it\color{gray}$\triangleright$~}
\algrenewcommand\algorithmiccomment[1]{\hfill{\commentsymbol#1}}
\newcommand{\changed}[1]{#1}
\newcommand{\pone}{{\ensuremath{\color{p1color}\blacktriangle}}\xspace}
\newcommand{\ptwo}{{\ensuremath{\color{p2color}\blacktriangledown}}\xspace}
\newcommand{\util}[2]{\textbf{\textsf{{\color{p#1color}#2}}}}
\newcommand{\action}[3]{\textbf{\textsf{\color{p#1color}#2$_{\boldsymbol{#3}}$}}}
\definecolor{p0color}{RGB}{0,0,0}
\definecolor{p1color}{RGB}{31,119,180}
\definecolor{p2color}{RGB}{255,127,14}
\definecolor{s2color}{RGB}{44,160,44}
\definecolor{p3color}{RGB}{214,39,40}
\title{Mediator Interpretation and Faster Learning Algorithms for Linear Correlated Equilibria in General Extensive-Form Games}
\author{%
Brian Hu Zhang \\
Carnegie Mellon University\\
\texttt{bhzhang@cs.cmu.edu} \\
\and
Gabriele Farina \\
MIT\\
\texttt{gfarina@mit.edu} \\
\and
Tuomas Sandholm \\
Carnegie Mellon University\\
Strategic Machine, Inc. \\
Strategy Robot, Inc. \\
Optimized Markets, Inc. \\
\texttt{sandholm@cs.cmu.edu} \\
}
\begin{document}

\maketitle
\begin{abstract} %
    A recent paper by \citet{Farina23:Polynomial} established the existence of uncoupled no-linear-swap regret dynamics with polynomial-time iterations in extensive-form games. The equilibrium points reached by these dynamics, known as \emph{linear correlated equilibria}, are currently the tightest known relaxation of correlated equilibrium that can be learned in polynomial time in any finite extensive-form game. However, their properties remain vastly unexplored, and their computation is onerous. In this paper, we provide several contributions shedding light on the fundamental nature of linear-swap regret. First, we show a connection between linear deviations and a generalization of {\em communication deviations} in which the player can make queries to a ``mediator'' who replies with action recommendations, and, critically, the player is not constrained to match the timing of the game as would be the case for communication deviations. We coin this latter set the {\em untimed communication (UTC) deviations}. We show that the UTC deviations coincide precisely with the linear deviations, and therefore that any player minimizing UTC regret also minimizes linear-swap regret. We then leverage this connection to develop state-of-the-art no-regret algorithms for computing linear correlated equilibria, both in theory and in practice. In theory, our algorithms achieve polynomially better per-iteration runtimes; in practice, our algorithms represent the state of the art by several orders of magnitude.
\end{abstract}

\section{Introduction}

In no-regret learning, a player repeatedly interacts with a possibly adversarial environment. The task of the player is to minimize its {\em regret}, which is defined to be the difference between the utility experienced by the player, and the largest utility that it could have achieved in hindsight if it had played other strategies instead, according to some {\em strategy transformation} or {\em deviation}. The set of allowable deviations defines the notion of regret, with larger sets corresponding to tighter notions. Two extremes are {\em external deviations}, which are the set of all constant transformations, and {\em swap deviations}, which are  all possible functions.

In games, no-regret learning has a very tight connection to notions of {\em correlated equilibrium}. Each notion of regret has its corresponding notion of equilibrium, which will be reached by a set of players that independently run no-regret algorithms for that notion of regret. External and swap deviations, respectively, correspond to the well-known {\em normal-form coarse correlated} equilibrium~\cite{Moulin78:Strategically} (NFCCE) and {\em normal-form correlated} equilibrium~\cite{Aumann74:Subjectivity} (NFCE). For extensive-form games specifically, other sets of deviations include the {\em trigger deviations}~\cite{Gordon08:No,Farina22:Simple}, which correspond to {\em extensive-form correlated equilibrium}~\cite{Stengel08:Extensive}, and the {\em communication deviations}~\cite{Zhang22:Polynomial,Fujii23:Bayes}, which correspond to {\em communication equilibrium}\footnote{Technically, communication equilibria are more broad than $(\Phi_i)$-equilibria where $\Phi_i$ is the set of communication deviations: in a communication equilibrium, there is an explicit {\em mediator} who has the power not only to sample a strategy profile, but also to {\em pass private information} between players---so a communication equilibrium is not necessarily a correlated profile at all. This distinction is fairly fundamental: it is the reason why polynomial-time algorithms for optimal communication equilibrium can exist for extensive-form games~\mbox{\cite{Zhang22:Polynomial}}. We may call a communication equilibrium that also happens to be a correlated profile a {\em private communication equilibrium}, where {\em private} denotes that the mediator is not allowed to pass information between players.  However, since this paper focuses on no-regret learning, we have no reason to make this distinction, so we largely ignore it.\label{footnote:comm}}~\cite{Myerson86:Multistage,Forges86:Approach}.

In this paper, we consider a notion of regret first studied for extensive-form games by \citet{Farina23:Polynomial}, namely, regret with respect to the set of {\em linear functions} from the strategy set to itself. This notion is a natural stepping stone between external regret, which is very well studied, and swap regret, for which \changed{achieving $\poly(d) \cdot T^c$ regret, where $d$ is the size of the decision problem and $c<1$, is a long-standing open problem}. We make two main contributions.

The first contribution is conceptual: we give, for extensive-form games, an {\em interpretation} of the set of linear deviations. More specifically, we will first introduce a set of deviations, which we will call the {\em untimed communication (UTC) deviations} that, a priori, seems very different from the set of linear deviations at least on a conceptual level. The deviation set, rather than being defined {\em algebraically} (linear functions), will be defined in terms of an interaction between a {\em deviator}, who wishes to evaluate the deviation function at a particular input, and a {\em mediator}, who answers queries about the input. We will show the following result, which is our first main theorem:
\begin{theorem*}
    The untimed communication deviations are precisely the linear deviations.
\end{theorem*}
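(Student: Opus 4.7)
The plan is to prove the two set inclusions separately.

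For the direction UTC $\subseteq$ Linear, I would first formalize a UTC deviation as a protocol in which the deviator adaptively queries the mediator for its recommended action at mediator information sets, in any order of the deviator's choosing, and ultimately outputs a random pure strategy. When the mediator samples its recommendations from a sequence-form strategy $x$, the distribution of the deviator's output is a fixed linear function of $x$: each mediator answer at a queried information set is distributed according to a marginal of $x$, and those marginals are linear in $x$, so marginalizing over the mediator's randomness yields $y = Mx$ for a matrix $M$ that depends only on the protocol and not on $x$. Validity of $M$ as a map from sequence-form strategies to sequence-form strategies follows because every realization of the output is itself a valid pure sequence-form strategy.

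For the reverse direction Linear $\subseteq$ UTC, the main task is to implement every linear deviation $M$ by a UTC protocol. My plan is to first reduce to the case where $M$ is an extreme point of the linear deviations polytope and then to construct a UTC protocol for each such extreme point. An extreme point should be ``deterministic'' in the sense that, for each pure input strategy $\pi$ (a vertex of the input polytope), it outputs a single pure strategy $M\pi$. The construction then proceeds inductively over the deviator's own information sets in a top-down traversal of the deviator's tree: at each deviator information set $I$, the deviator consults the structure of $M$ to determine which mediator information sets it must query in order to compute the action to play at $I$, makes those queries, and plays accordingly. Because the UTC protocol is unconstrained by the timing of the underlying game, the deviator can freely make such queries out of order. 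One then verifies that the resulting induced linear map on the input polytope coincides with $M$ on all vertices, hence everywhere.

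The main obstacle is the Linear $\subseteq$ UTC direction: one needs a clean characterization of the extreme points of the linear deviations polytope, and one must check that the UTC protocol implied by such an extreme point is internally consistent, i.e.\ that the deviator's action at each of its own information sets can be expressed as a well-defined function of a finite set of mediator recommendations with no circular dependencies. A careful inductive construction, organized along a combined tree of the deviator's and mediator's information sets and maintaining the invariant that the partial protocol agrees with $M$ on all sequences processed so far, should handle this cleanly.
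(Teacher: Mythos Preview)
Your plan for the inclusion $\dPhi{UTC}\subseteq\dPhi{Lin}$ is fine and matches the paper: once a UTC deviation is encoded by its sequence-form pair $(\mat A,\mat B)$, the induced map is $\vec x\mapsto\mat A\vec x$, which is linear.

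The reverse inclusion is where your proposal diverges from the paper, and there is a genuine gap. You want to first reduce to extreme points of $\dPhi{Lin}$ and use that an extreme linear deviation maps every pure strategy to a pure strategy, then build a UTC protocol for each such ``deterministic'' map. The problem is that this vertex-to-vertex property of extreme linear maps is \emph{not} something you can assume; in the paper it is obtained only as a \emph{corollary} of the equivalence theorem (\Cref{cor:convhull}), not as a lemma toward it. The paper explicitly notes that the property fails for general polytopes (the trapezoid example in \Cref{fig:trapezoid}), so any direct proof would have to exploit the sequence-form structure in a nontrivial way. As stated, your plan is circular: you are assuming exactly the structural fact that the theorem is meant to establish.

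The paper's route avoids extreme points entirely. It first puts each row of $\mat A$ into a canonical nonnegative form (\Cref{lem:representation}: for every decision point $\tilde\jmath$ there is some action with coefficient zero), and then \emph{explicitly constructs} the companion matrix $\mat B$ via the recursion
\[
\mat B(j,\tilde\jmath)=\min_{a\in A_{\tilde\jmath}}\Bigl(\sum_{a'\in A_j}\mat A(ja',\tilde\jmath a)+\sum_{\tilde\jmath'\in C_{\tilde\jmath a}}\mat B(j,\tilde\jmath')-\mat A(p_j,\tilde\jmath a)\Bigr),
\]
and verifies that $(\mat A,\mat B)$ satisfies the constraint system \eqref{eq:constraint system}, including $\mat B\ge 0$ (the last step is where the canonical form of $\mat A$ is used). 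This is purely algebraic and never appeals to what the extreme points of $\dPhi{Lin}$ look like. If you want to pursue your combinatorial protocol-construction idea, you would first need an independent proof that extreme linear maps on sequence-form polytopes send vertices to vertices; absent that, the paper's algebraic construction is the cleaner path.
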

The mediator-based framework is more in line with other extensive-form deviation sets---indeed, all prior notions of regret for extensive form, to our knowledge, including all the notions discussed above, can be expressed in terms of the framework. As such, the above theorem places linear deviations firmly within the same framework usually used to study deviations in extensive form.

We will then demonstrate that the set of UTC deviations is expressible in terms of {\em scaled extensions}~\cite{Farina19:Efficient}, opening up access to a wide range of extremely fast algorithms for regret minimization, both theoretically and practically, for UTC deviations and thus also for linear deviations. Our second main theorem is as follows.
\begin{theorem*}[Faster linear-swap regret minimization]
    There exists a regret minimizer with regret $O(d^2\sqrt{T})$ against all linear deviations, and whose per-iteration complexity is dominated by the complexity of computing a fixed point of a linear map $\phi^{(t)} : \co\mc X \to \co\mc X$.
\end{theorem*}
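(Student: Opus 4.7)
The plan is to combine the equivalence between UTC and linear deviations (the first main theorem) with the $\Phi$-regret reduction of Gordon--Greenwald--Marks and the efficient regret minimization machinery for scaled extensions of \citet{Farina19:Efficient}. The overall architecture is standard: maintain an inner regret minimizer $\mc R$ whose decision set is $\co\Phi$, where $\Phi$ is the set of linear (equivalently, UTC) deviations; at each round $t$, obtain $\phi^{(t)} \in \co\Phi$ from $\mc R$, compute a fixed point $x^{(t)} \in \co\mc X$ of $\phi^{(t)}$, play $x^{(t)}$, and feed back to $\mc R$ the linear utility $\phi \mapsto \langle u^{(t)}, \phi(x^{(t)})\rangle$. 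A short calculation (adding and subtracting $\langle u^{(t)}, x^{(t)}\rangle$ and using the fixed-point property) shows that the $\Phi$-regret of the outer process equals the external regret of $\mc R$ on this sequence of linear losses, so the whole problem reduces to external regret minimization over $\co\Phi$.

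The first concrete step is to replace $\Phi$ by its UTC representation and unfold it, via the construction referred to just before the theorem, as a scaled extension of total dimension $O(d^2)$. Since $d$ bounds the number of sequences of the underlying decision problem, a UTC deviation is specified by a recommendation at every (query, answer) pair indexed by a sequence, giving an $O(d^2)$-dimensional scaled-extension polytope. Plugging this polytope into the inductive CFR-style (or OMD-style) regret minimizer for scaled extensions of \citet{Farina19:Efficient} yields an external regret bound of the form $O(D\sqrt{T})$ on a $D$-dimensional instance, and hence $O(d^2\sqrt{T})$ after substituting $D = O(d^2)$.

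The last step is to account for the per-iteration cost. The scaled-extension regret minimizer runs in time linear in the size of the inductive description, i.e.\ $O(d^2)$ per round; forming the utility vector $\phi \mapsto \langle u^{(t)}, \phi(x^{(t)})\rangle$ is likewise a single $O(d^2)$ operation, since each of the $O(d^2)$ coordinates of $\phi$ acts linearly on a vector of length $O(d)$. The only remaining work is the fixed-point step $x^{(t)} = \phi^{(t)}(x^{(t)})$, which is exactly what the statement isolates. Existence of such a fixed point is automatic because $\phi^{(t)}$ is a continuous (in fact linear) self-map of the compact convex set $\co\mc X$, so Brouwer applies, and in this linear setting it can also be obtained by solving $(\phi^{(t)} - I)x = 0$ subject to the polytope constraints defining $\co\mc X$.

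The main obstacle I anticipate is the dimension-counting step: carefully verifying that the UTC deviation polytope admits a scaled-extension description whose total size is $O(d^2)$, and that the scaled-extension regret bound really does come out as $O(d^2\sqrt{T})$ rather than acquiring extra polytope-diameter or utility-range factors. Once this bookkeeping is in place, the rest of the argument---the Gordon--Greenwald--Marks reduction, the fixed-point existence, and the invocation of the scaled-extension regret minimizer---is essentially off-the-shelf.
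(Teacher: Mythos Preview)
Your proposal is correct and follows essentially the same approach as the paper: invoke the UTC\,$=$\,linear equivalence, apply the \citet{Gordon08:No} reduction to external regret over $\dPhi{UTC}$, exploit the scaled-extension structure of that polytope to run CFR with $O(d^2\sqrt{T})$ regret and $O(d^2)$ per-iteration overhead, and isolate the fixed-point computation as the dominant cost. The only refinement the paper adds is that the scaled-extension decomposition of $\dPhi{UTC}$ is obtained by recognizing the UTC decision problem as a \emph{DAG} decision problem and invoking the construction of \citet{Zhang23:Team_DAG}, rather than appealing directly to \citet{Farina19:Efficient}; this is precisely the bookkeeping step you flagged as the main obstacle.
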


In particular, using the algorithm of \citet{Cohen21:Solving} to solve the linear program of finding a fixed point, our per-iteration complexity is $\tilde O(d^\omega)$, where $\omega \approx 2.37$ is the current matrix multiplication constant and $\tilde O$ hides logarithmic factors. \changed{We elaborate on the fixed-point computation in \Cref{sec:rm}.} This improves substantially on the result of \citet{Farina23:Polynomial}, which has the same regret bound but whose per-iteration computation involved a {\em quadratic} program (namely, an $\ell_2$ projection), which has higher complexity than a linear program (they give a bound of $\tilde O(d^{10})$). Finally, we demonstrate via experiments that our method is also empirically faster than the prior method.
\section{Preliminaries}
Here, we review fundamentals of tree-form decision making, extensive-form games, and online convex optimization. Our exposition and notation mostly follows \citet{Farina23:Polynomial}.
\subsection{Tree-Form Decision Making}
A {\em tree-form decision problem} is a rooted tree where every path alternates between two types of nodes: {\em decision points} $(j \in \mc J)$ and {\em observation points} (or {\em sequences}) $(\sigma \in \Sigma)$. The root node $\Root \in \Sigma$ is always an observation point. At decision points, the edges are called {\em actions}, and the player must select one of the legal actions. At observation points, the edges are called {\em observations} or {\em signals}, and the player observes one of the signals before continuing. The number of sequences is denoted $d = |\Sigma|$. The parent of a node $s$ is denoted $p_s$. The set of actions available at a decision point $j$ is denoted $A_j$. The set of decision point following an observation point $\sigma$ will be denoted $C_\sigma$. An observation node $\sigma \in \Sigma$ is uniquely identified by its parent decision point $j$ and the action $a$ taken at $j$. We will hence use $ja$ as an alternative notation for the same observation point.

A {\em sequence-form pure strategy} for the player is a vector $\vec x \in \{0, 1\}^d$, indexed by sequences in $\Sigma$, where $\vec x(\sigma) = 1$ if the player selects {\em every} action on the $\Root \to \sigma$ path. A {\em sequence-form mixed strategy} is a convex combination of sequence-form pure strategies. We will use $\mc X$ to denote the set of sequence-form pure strategies. An important property~\cite{Romanovskii62:Reduction,Stengel96:Efficient} is that the convex hull of $\mc X$, which we will denote $\co\mc X$, is described by a system of linear constraints:
\begin{align}
    \vec x(\Root) = 1, \qq{} \vec x(p_j) = \sum_{a \in A_j} \vec x(ja) \ \forall j \in \mc J.\label{eq:sequence-form}
\end{align}
Tree-form decision problems naturally encode the decision problems faced by a player with perfect recall in an {\em extensive-form game}. An extensive-form game with $n$ players is a game of incomplete information played on a tree of nodes $\mc H$. At every non-leaf node $h \in H$, the children of $h$ are labeled with actions $a \in A_h$. Each nonterminal node is assigned to a different player, and the player to whom a node is assigned selects the action at that node. The nodes assigned to a given player are partitioned into {\em information sets}, or {\em infosets}; a player cannot distinguish among the nodes in a given infoset, and therefore a pure strategy must play the same action at every node in an infoset. Finally, each player has a {\em utility function} $u_i : \mc Z \to \R$, where $\mc Z$ is the set of terminal nodes. We will assume {\em perfect recall}, that is, we will assume that players never forget information.

In an extensive-form game, a perfect-recall player's decision problem is a tree-form decision problem whose size (number of nodes) is linear in the size of the game tree, and the utility functions are linear in every player's strategy. We will use $\mc X_i$ to denote the tree-form decision problem faced by player $i$. Then the utility functions $u_i : \co\mc X_1 \times \dots \times \co\mc X_n \to \R$ are linear in each player's strategy.

\subsection{Online Convex Optimization and $\Phi$-Regret}\label{sec:prelims-phi-regret}

In {\em online convex optimization}~\cite{Zinkevich03:Online}, a player (or ``learner'') has a strategy  set $\mc X \subseteq \R^d$, and repeatedly faces a possibly-adversarial environment. More formally, at every iteration $t = 1, \dots, T$, the player selects a distribution $\pi ^{(t)} \in \Delta(\mc X)$, and an adversary simultaneously selects a {\em utility vector} $\vec u^{(t)} \in [0,1]^d$. The player then observes the utility $\vec u^{(t)}$, selects a new strategy $\vec x^{(t+1)} \in \mc X$, and so on. Our metric of performance will be the notion of {\em $\Phi$-regret}~\cite{Greenwald03:General}. Given a set of transformations\footnote{$A^B$ is the set of functions from $B$ to $A$} $\Phi \subseteq (\co \mc X)^{\mc X}$:
\begin{definition}\label{def:phi-regret}
    The {\em $\Phi$-regret} of the player after $T$ timesteps is given by
    \begin{align}
        \Reg_{\Phi}(T) := \max_{\phi \in \Phi} \sum_{t=1}^T \E_{\vec x\sim \pi^{(t)}} \ip*{\vec u^{(t)},\phi(\vec x) - \vec x}.
    \end{align}
\end{definition}
Various choices of $\Phi$ correspond to various notions of regret, with larger sets resulting in stronger notions of regret. In an extensive-form game, notions of $\Phi$-regret correspond to notions of equilibrium. For each player $i \in [n]$, let $\Phi_i \subseteq (\co \mc X_i)^{\mc X_i}$ be a set of transformations for player $i$.

    A distribution $\pi \in \Delta(\mc X_1 \times \dots \times \mc X_n)$ is called a {\em correlated profile}. A $\eps$-$(\Phi_i)$-equilibrium is a correlated profile such that
    $        \E_{\vec x \sim \pi} \qty[u_i(\phi(\vec x_i), \vec x_{-i}) - u_i(\vec x_i, \vec x_{-i})] \le \eps.
    $
    for every player $i$ and deviation $\phi \in \Phi_i$.
If all players independently run $\Phi_i$-regret minimizers over their strategy sets $\mc X_i$, the empirical frequency of play $\pi = \op{Unif}(\pi^{(1)}, \dots, \pi^{(T)})$ will be an $\eps$-$(\Phi_i)$-equilibrium for $\eps = \max_i \Reg_{\Phi_i}(T)/T$. Thus, $\Phi$-regret minimizers immediately imply no-regret learning algorithms converging to $(\Phi_i)$-equilibria. \changed{Some common choices of $\Phi$, and corresponding equilibrium notions, are in \Cref{tab:eqconcepts}.}
\begin{table}
\centering
\scalebox{0.8}{
    \begin{tabular}{lll}
        \toprule
        \bf Deviations $\Phi$ & \bf Equilibrium concept & \bf References \\
        \midrule
        Constant (external), $\Phi = \{\phi: \vec x \mapsto \vec x_0 \mid \vec x_0 \in \mc X\}$ & Normal-form coarse correlated & \citet{Moulin78:Strategically} \\
        Trigger (see \Cref{sec:mediator}) & Extensive-form correlated & \citet{Stengel08:Extensive} \\
        Communication (see \Cref{sec:mediator}) & Communication & \citet{Forges86:Approach,Myerson86:Multistage}\\
        Linear / Untimed communication & Linear correlated & \citet{Farina23:Polynomial}; {\bf this paper}\\
        Swap, $\Phi = \mc X^{\mc X}$ & Normal-form correlated & \citet{Aumann74:Subjectivity} \\
        \bottomrule
    \end{tabular}
}
\caption{\changed{Some examples of deviation sets $\Phi$ and corresponding notions of correlated equilibrium,  in increasing order of size of $\Phi$ (and thus increasing tightness of the equilibrium concept)}}\label{tab:eqconcepts}
\end{table}
In this paper, our focus will be on {\em linear-swap regret}, which is the regret against the set $\dPhi{Lin}$ of all linear\footnote{For sets $\mc X$ whose affine hull excludes the origin, there is no point in distinguishing affine maps from linear maps. Sequence-form strategy sets $\mc X$ are such sets, because $\vec x(\Root) = 1$ is always a constraint. So, throughout this paper, we will not distinguish between linear and affine maps.} maps $\phi : \mc X \to \co \mc X$. To our knowledge, linear-swap regret was first proposed by \citet{Gordon08:No} for general convex spaces. They developed a general framework for $\Phi$-regret minimization, which we now review. We start by observing that any linear $\phi : \mc X \to \co \mc X$ is naturally extended to a function $\phi : \co \mc X \to \co \mc X$ by setting $\phi: \co \mc X \ni \vec x \mapsto \E_{\vec x' \sim \pi} \phi(\vec x')$, where $\pi \in \Delta(\mc X)$ is any distribution for which $\E_{\vec x' \sim \pi} \vec x' = \vec x$ (The choice of distribution is irrelevant because of linearity of expectation, and thus $\phi : \co \mc X \to \co \mc X$ is uniquely defined.)

\begin{theorem}[\citealp{Gordon08:No}]\label{th:gordon}
    Let $\Phi \subseteq \dPhi{Lin}$ be a convex set of transformations, and let  $\mc R_\Phi$ be a deterministic\footnote{A deterministic regret minimizer is one that uses no randomness internally to compute its strategies. When the strategy set (here $\Phi$) is convex, and the notion is external regret, the learner need not randomize: since utilities are linear, picking a distribution $\pi^{(t)} \in \Delta(\Phi)$ is equivalent to deterministically selecting the point $\phi^{(t)} := \E_{\phi\sim \pi^{(t)}} \phi$. Thus, we allow the $\mc R_\Phi$-adversary (here, the player itself) to set a utility $\phi \mapsto \ip*{\vec u^{(t)}, \phi(\vec x^{(t)})}$ that depends on the learner's choice of $\phi$.} external regret minimizer over $\Phi$,  whose regret after $T$ timesteps is $R$. Then the following algorithm achieves $\Phi$-regret $R$ on $\mc X$ after $T$ timesteps: At every timestep $t$, the player queries $\mc R_\Phi$ for a strategy (transformation) $\phi^{(t)} \in \Phi$, and the player selects a strategy $\vec x^{(t)} \in \co \mc X$ that is a fixed point of $\phi^{(t)}$, that is, $\phi^{(t)}(\vec x^{(t)}) = \vec x^{(t)}$. Upon observing utility $\vec u^{(t)}$, the player forwards the utility $\phi \mapsto \ip*{\vec u^{(t)}, \phi(\vec x^{(t)})}$ to $\mc R_\Phi$.
\end{theorem}
Therefore, to construct a $\dPhi{Lin}$-regret minimizer over $\mc X$, it suffices to be able to (1) minimize {\em external} regret over $\dPhi{Lin}$, and (2) compute fixed points of transformations $\phi^{(t)}$. For linear $\phi^{(t)}: \vec x \mapsto \mat A \vec x$, computing a fixed point amounts to solving a linear program. Therefore, the focus of this paper will be on external regret minimizers over the set $\dPhi{Lin}$.

For extensive-form games, linear-swap regret was recently studied in detail by \citet{Farina23:Polynomial}: they provide a characterization of the set $\dPhi{Lin}$ when $\mc X$ is a sequence-form polytope, and thus derive an algorithm for minimizing $\dPhi{Lin}$-regret over $\mc X$. Their paper  is the starting point of ours.

\section{Mediators and UTC Deviations}\label{sec:mediator}
With the notable exception  of linear deviations, most sets of deviations $\Phi$ for extensive-form games are defined by interactions between a {\em mediator} who holds a strategy $\vec x \in \mc X$, and a {\em deviator}, who should compute the function $\phi(\vec x)$ by making queries to the mediator. The set of deviations is then defined by what queries that the player is allowed to make. Before continuing, we will first formulate the sets $\Phi$ mentioned in \Cref{sec:prelims-phi-regret} in this paradigm, for intuition. For a given decision point $j$, call an action $a \in A_j$ the {\em recommended action at $j$}, denoted $a(\vec x, j)$, if $\vec x(ja) = 1$. Since $\vec x$ is a sequence-form strategy, it is possible for a decision point to have no recommended action if its parent $p_j$ is itself not recommended.
\begin{itemize}
    \item Constant (NFCCE): The deviator cannot to make any queries to the mediator.
    \item Trigger (EFCE): The deviator, upon reaching a decision point $j$, learns the recommended action (if any) at $j$ before selecting its own action.
    \item Communication: The deviator maintains a {\em state} with the mediator, which is a sequence $\sigma$, initially $\Root$. Upon reaching a decision point $j$, the deviator selects a decision point $j' \in C_\sigma$ (possibly $j' \ne j$) at which to query the mediator, the deviator observes the recommendation $a' = a(\vec x, j')$, then the deviator must pick an action $a \in A_j$. \changed{The state is updated to $j'a'$.}
    \item Swap (NFCE): The deviator learns the whole strategy $\vec x$ before selecting its strategy.
\end{itemize}
\changed{An example of a communication deviation can be found in \Cref{sec:example}, and further discussion of these solution concepts can be found in \Cref{sec:related}.}
Of these, the closest notion to ours is the notion of communication deviation, and that is the starting point of our construction. One critical property of communication deviations is that the mediator and deviator ``share a clock'': for every decision point reached, the deviator must make exactly one query to the mediator. As the name suggests, our set of {\em untimed} deviations results from removing this timing restriction, and therefore allowing the deviator to make {\em any number} (zero, one, or more than one) of queries to the mediator for every decision point reached. We formally define the decision problem faced by an untimed deviator as follows.
\begin{definition}\label{def:utc}
    The {\em UTC decision problem} corresponding to a given tree-form decision problem is defined as follows. Nodes are identified with pairs $(s, \tilde s)$ where $s, \tilde s \in \Sigma \cup \mc J$. $s$ represents the state of the real decision problem, and $
        \tilde s$ represents the state of the mediator. The root is $(\Root, \Root) \in \Sigma \times \Sigma$.
    \begin{enumerate}
        \item $(\sigma, \tilde\sigma) \in \Sigma \times \Sigma$ is an observation point.  The deviator observes the next decision point $j \in C_\sigma$, and the resulting decision point is $(j, \tilde\sigma)$
        \item $(j, \tilde\jmath) \in \mc J \times \mc J$ is an observation point. The deviator observes the recommendation $a = a(\vec x, \tilde\jmath)$, and the resulting decision point is $(j, \tilde\jmath a)$.
        \item $(j, \tilde\sigma) \in \mc J \times \Sigma$ is a decision point. The deviator can choose to either play an action $a \in A_j$, or to query a decision point $\tilde\jmath \in C_{\tilde\sigma}$. In the former case, the resulting observation point is $(ja, \tilde\sigma)$ for $a \in A_j$; in the latter case, the resulting observation point is $(j, \tilde\jmath)$.
    \end{enumerate}
\end{definition}

Any mixed strategy of the deviator in this decision problem defines a function $\phi : \mc X \to \co\mc X$, where $\phi(\vec x)(\sigma)$ is the probability that an untimed deviator plays all the actions on the path to $\sigma$ when the mediator recommends according to pure strategy $\vec x$. We thus define:
\begin{definition}
    An {\em UTC deviation} is any function $\phi : \mc X \to \co\mc X$ induced by a mixed strategy of the deviator in the UTC decision problem.
\end{definition}
Clearly, the set of UTC deviations is at least as large as the set of communication deviations, and at most as large as the set of swap deviations. In the next section, we will discuss how to represent UTC deviations, and show that UTC deviations coincide precisely with linear deviations.

\section{Representation of UTC Deviations and Equivalence between UTC and Linear Deviations}
Since UTC deviations are defined by a decision problem, one method of representing such deviations is to express it as a tree-form decision problem and use the sequence-form representation. However, the UTC decision problem is not a tree---it is a DAG, since there are multiple ways of reaching any given decision point $(j, \tilde\sigma)$ depending on the ordering of the player's past actions and queries. Converting it to a tree by considering the tree of paths through the DAG would result in an exponential blowup: a decision point $(j, \tilde\sigma)$, where $j$ is at depth $k$ and $\tilde\sigma$ is at depth $\ell$, can be reached in roughly $\binom{k + \ell}{k}$ ways, so the total number of paths can be exponential in the depth of the decision problem even when the number of sequences, $d = |\Sigma|$, is not.

However, it is still possible to define the ``sequence form'' of a pure deviation in our UTC decision problem as follows\footnote{This construction is a special case of the more general construction of sequence forms for DAG decision problems explored by \citet{Zhang23:Team_DAG} in the case of team games.}: it is a pair of matrices $(\mat A, \mat B)$ where $\mat A \in \{0, 1\}^{\Sigma\times\Sigma}$ encodes the part corresponding to sequences $(\sigma, \tilde\sigma)$, and $\mat B \in \{0,1\}^{\mc J \times \mc J}$ encodes the part corresponding to decision points $(j, \tilde\jmath)$. $\mat A(\sigma, \tilde\sigma) = 1$ if the deviator plays all the actions on {\em some} path to observation point $(\sigma, \tilde\sigma)$, and similarly  $\mat B(j, \tilde\jmath) = 1$ if the deviator plays all the actions on some path to observation node $(j, \tilde\jmath)$. Since the only possible way for two paths to end at the same observation point is for the deviator to have changed the order of actions and queries, for any given pure strategy of the deviator, at most one path can exist for both cases. Therefore, the set of mixed sequence-form deviations can be expressed using the following set of constraints:
\begin{align}
    \begin{aligned}
        \mat A(p_j, \tilde\sigma) + \mat B(j, p_{\tilde\sigma}) & = \sum_{a \in A_j} \mat A(ja, \tilde\sigma) + \sum_{\tilde\jmath \in C_{\tilde\sigma}} \mat B(j, \tilde\jmath) &  & \qq{$\forall j \in \mc J, \tilde\sigma \in \Sigma$} \\
        \mat A(\Root, \Root)                                    & = 1                                                                                                                                                                     \\
        \mat A(\Root, \tilde\sigma)                             & = 0                                                                                                            &  & \qq{$\forall \tilde\sigma \ne \Root$}
        \\
        \mat A, \mat B                                          & \ge 0
    \end{aligned}\label{eq:constraint system}
\end{align}
where, in a slight abuse of notation, we define $\mat B(j, p_\Root) := 0$ for every $j \in \mc J$.
Moreover, for any pair of matrices $(\mat A, \mat B)$ satisfying the constraint system and therefore defining some deviation $\phi : \mc X \to \co \mc X$, it is easy to compute how $\phi$ acts on any $\vec x \in \mc X$: the probability that the deviator plays all the actions on the $\Root\to\sigma$ path is simply given by
\begin{align}
    \sum_{\tilde\sigma \in \Sigma} \vec x(\tilde\sigma) \mat A(\sigma, \tilde\sigma) = (\mat A \vec x)(\sigma),
\end{align}
and therefore $\phi$ is nothing more than a matrix multiplication with $\mat A$, that is, $\phi(\vec x) = \mat A \vec x$. We have thus shown that every UTC deviation is linear, that is, $\dPhi{UTC} \subseteq \dPhi{Lin}$. In fact, the reverse inclusion holds too:
\begin{restatable}{theorem}{thMain}\label{th:main}
    The UTC deviations are precisely the linear deviations. That is,  $\dPhi{UTC} = \dPhi{Lin}$.
\end{restatable}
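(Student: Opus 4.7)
The forward inclusion $\dPhi{UTC} \subseteq \dPhi{Lin}$ is already established in the excerpt, since every UTC deviation takes the form $\phi(\vec x) = \mat A \vec x$. For the reverse inclusion, my plan is to show that any linear $\phi : \mc X \to \co\mc X$ admits a pair of non-negative matrices $(\mat A, \mat B)$ satisfying the constraint system \eqref{eq:constraint system} with $\mat A \vec x = \phi(\vec x)$ for every $\vec x \in \mc X$. Once such a pair is in hand, the DAG sequence-form theory of \citet{Zhang23:Team_DAG} (invoked in the footnote) expresses $(\mat A, \mat B)$ as a convex combination of sequence-form vectors of pure UTC strategies, exhibiting $\phi$ as a mixed UTC deviation.

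The construction proceeds in two parts. First, I pick a matrix representation $\mat A$ of $\phi$. Because $\mc X$ does not span $\R^d$, many such representations exist, and I use this freedom to make $\mat A$ entrywise non-negative and satisfy the boundary conditions $\mat A(\Root, \Root) = 1$ and $\mat A(\Root, \tilde\sigma) = 0$ for $\tilde\sigma \ne \Root$. Second, for each $j \in \mc J$, I define $\mat B(j, \cdot) : \mc J \to \R_{\ge 0}$ by solving the flow equation obtained from rearranging \eqref{eq:constraint system},
\[
\sum_{\tilde\jmath \in C_{\tilde\sigma}} \mat B(j, \tilde\jmath) - \mat B(j, p_{\tilde\sigma}) \;=\; \delta(j, \tilde\sigma), \qquad \delta(j, \tilde\sigma) := \mat A(p_j, \tilde\sigma) - \sum_{a \in A_j} \mat A(ja, \tilde\sigma),
\]
read as a flow on the mediator's tree with source $\delta(j, \tilde\sigma)$ at each observation point $\tilde\sigma$. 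Starting from $\tilde\sigma = \Root$ (with $\mat B(j, p_\Root) := 0$ by convention), the equation determines the aggregate $\sum_{\tilde\jmath \in C_{\tilde\sigma}} \mat B(j, \tilde\jmath)$ at each level; there is freedom in how to split this among children, which I would exercise to maintain non-negativity.

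The main obstacle is to verify $\mat B \ge 0$ and, beforehand, to make the first-step choice of $\mat A$ \emph{compatible} with this flow problem (for instance, leaves $\tilde\sigma \in \Sigma$ with $C_{\tilde\sigma} = \emptyset$ force $\mat B(j, p_{\tilde\sigma}) = -\delta(j, \tilde\sigma)$, which demands both that sibling-leaf values of $\delta$ agree and that $\delta \le 0$ there). The crucial identity is
\[
\sum_{\tilde\sigma \in \Sigma} \delta(j, \tilde\sigma)\, \vec x(\tilde\sigma) \;=\; (\phi(\vec x))(p_j) - \sum_{a \in A_j}(\phi(\vec x))(ja) \;=\; 0 \qquad \forall \vec x \in \mc X,
\]
which holds precisely because $\phi(\vec x) \in \co\mc X$. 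By substituting pure strategies $\vec x$ that selectively include or exclude specific paths in the mediator's tree, this identity gives non-negativity of subtree sums of $\delta$, which are exactly what the flow construction needs. I would structure the full verification as a bottom-up induction along the mediator's tree: the inductive hypothesis at depth $k$ both controls the aggregate demand propagating upward from depth-$(k{+}1)$ subtrees and specifies how to perturb $\mat A$ at depth $k$ (using the kernel of the evaluation map $\mat A \mapsto (\vec x \mapsto \mat A \vec x)$) to absorb any incompatibilities. This delicate interplay between choosing $\mat A$ and certifying $\mat B \ge 0$ is the technical heart of the argument.
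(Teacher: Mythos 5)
Your overall architecture matches the paper's: establish $\dPhi{UTC}\subseteq\dPhi{Lin}$ from the sequence-form representation, then, for the converse, pick a suitable matrix representation $\mat A$ of a given linear $\phi$ and construct a nonnegative $\mat B$ so that $(\mat A,\mat B)$ satisfies \eqref{eq:constraint system}, using the conservation identity $\sum_{\tilde\sigma}\delta(j,\tilde\sigma)\,\vec x(\tilde\sigma)=0$ (which the paper also exploits, via a fully-mixed $\vec x$ and a telescoping computation). However, there is a genuine gap exactly at the point you flag as the ``technical heart'': you never specify \emph{which} representation $\mat A$ to take, and entrywise nonnegativity plus the root-row conditions is not enough. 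The constraint system forces $\mat B$ almost completely (at every mediator leaf $\tilde\jmath a$ the value $\mat B(j,\tilde\jmath)$ is pinned to $-\delta(j,\tilde\jmath a)$, and these forced values propagate upward), so there is essentially no splitting freedom left to ``absorb incompatibilities,'' and whether the forced values are consistent across sibling leaves and nonnegative depends delicately on how each row of $\mat A$ was written. For a concrete failure: in a two-level decision problem with sequences $\Root, j_1a, j_1b, j_2c, j_2d$ (where $j_2$ follows $j_1a$), the functional $\vec x\mapsto \vec x(j_1a)$ has two nonnegative representations, $(0,1,0,0,0)$ and $(0,0,0,1,1)$; if the row for $p_j$ uses the second while the rows for the children $ja$ sum to the first, then $\delta(j,\cdot)=(0,-1,0,1,1)$ and the leaf constraints force $\mat B(j,j_2)=-1<0$, so no valid $\mat B$ exists for that (perfectly nonnegative) choice of $\mat A$. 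Your proposed fix---perturbing $\mat A$ within the kernel of the evaluation map during a bottom-up induction---is precisely the unproved step, and as stated it is a plan rather than an argument.

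The paper resolves this with a key lemma you are missing (\Cref{lem:representation}): every nonnegative linear functional on a sequence-form polytope has a \emph{unique canonical} representation $\vec c\ge 0$ in which every decision point has at least one action $a$ with $\vec c(ja)=0$, obtained by a bottom-up min-subtraction. Writing every row of $\mat A$ in this canonical form up front removes all the ``interplay'': the paper then defines $\mat B$ bottom-up by $\mat B(j,\tilde\jmath)=\min_{a}\mat{\tilde B}(j,\tilde\jmath a)$, proves the consistency (equality across sibling children) via the telescoping identity with a fully-mixed strategy, and proves $\mat B\ge 0$ by taking a deepest violating pair $(j,\tilde\jmath)$ and deriving $\mat A(p_j,\tilde\jmath\tilde a)>0$ for \emph{all} actions $\tilde a$, contradicting exactly the zero-action property guaranteed by the canonical form. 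So the missing idea is not the flow formulation or the conservation identity (you have both) but the canonical normalization of $\mat A$ that makes consistency and nonnegativity of the forced $\mat B$ provable; without it, your induction has no specified invariant and, as the example shows, the construction can genuinely fail.
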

The proof is deferred to \Cref{app:proofs}. Since the two sets are equivalent, in the remainder of the paper, we will use the terms {\em UTC deviation} and {\em linear deviation} (similarly, {\em UTC regret} and {\em linear-swap regret}) interchangeably.

\section{Example}\label{sec:example}
\begin{figure*}[p]
    \tikzset{
        every path/.style={-},
        every node/.style={draw},
        infoset1/.style={-, dotted, ultra thick, color=p1color},
        infoset2/.style={infoset1, color=p2color},
        terminal/.style={},
      ilabel/.style={fill=white, inner sep=0pt, draw=none},
    }
    \forestset{
            default preamble={for tree={
            parent anchor=south, child anchor=north, anchor=center, s sep=14pt, l=36pt
    }},
      p1/.style={
          regular polygon,
          regular polygon sides=3,
          inner sep=2pt, fill=p1color, draw=none},
      p2/.style={p1, shape border rotate=180, fill=p2color},
      parent/.style={no edge,tikz={\draw (#1.parent anchor) to (!.child anchor);}},
      parentd/.style n args={2}{no edge,tikz={\draw[ultra thick, p#2color] (#1.parent anchor) to (!.child anchor); }},
      nat/.style={},
      terminal/.style={draw=none, inner sep=2pt},
      el/.style n args={3}{edge label={node[midway, fill=white, inner sep=1pt, draw=none, font=\sf\footnotesize] {\action{#1}{#2}{#3}}}},
      d/.style={edge={ultra thick, draw={p#1color}}},
      comment/.style={no edge, draw=none, align=center, font=\tiny\sf},
  }
  \begin{center}
  \begin{forest}
    [,nat
      [,p1,label=right:{\util1{A}} [\util1{0},terminal,el={1}{a}{1}] [\util1{0},terminal,el={1}{a}{2}]]
      [,p1,label=right:{\util1{B}} [\util1{0},terminal,el={1}{b}{1}] [\util1{0},terminal,el={1}{b}{2}]]
      [,p1,label=right:{\util1{C}}
        [\util1{0},terminal,el={1}{c}{1}]
        [,nat,el={1}{c}{2}
            [,p1,label=right:{\util1{D}}
                [,p2,name=Fa,el={1}{d}{1} [\util1{+1},terminal, el={2}{f}{1}] [\util1{--10},terminal, el={2}{f}{2}]]
                [,p2,name=Fb,el={1}{d}{2} [\util1{--10},terminal, el={2}{f}{2}] [\util1{+1},terminal, el={2}{f}{2}]]
            ]
            [,p1,label=right:{\util1{E}}
                [,p2,name=Ga,el={1}{e}{1} [\util1{+1},terminal, el={2}{g}{1}] [\util1{--10},terminal, el={2}{g}{2}]]
                [,p2,name=Gb,el={1}{e}{2} [\util1{--10},terminal, el={2}{g}{2}] [\util1{+1},terminal, el={2}{g}{2}]]
            ]
        ]
      ]
    ]
    \draw[infoset2] (Fa) -- node[midway, ilabel]{\util2{F}} (Fb);
    \draw[infoset2] (Ga) -- node[midway, ilabel]{\util2{G}} (Gb);
  \end{forest}
  \end{center}
   \caption{An example extensive-form game in which communication deviations are a strict subset of UTC deviations. There are two players, P1 (\pone) and P2 (\ptwo). Infosets for both players are labeled with capital letters (\eg, \util{1}{A}) and joined by dotted lines. Actions are labeled with lowercase letters and subscripts (\eg, \action{1}{a}{1}). P1's utility is labeled on each terminal node. P2's utility is zero everywhere (not labeled). Boxes are chance nodes, at which chance plays uniformly at random.}\label{fig:example}
\end{figure*}
\begin{figure*}[p]
\tikzset{
        every path/.style={-},
        every node/.style={draw},
        infoset1/.style={-, dotted, ultra thick, color=p1color},
        infoset2/.style={infoset1, color=p2color},
        terminal/.style={},
      ilabel/.style={fill=white, inner sep=0pt, draw=none},
    }
    \forestset{
            default preamble={for tree={
            parent anchor=south, child anchor=north, anchor=center, s sep=14pt, l=36pt
    }},
      p1/.style={
          regular polygon,
          regular polygon sides=3,
          inner sep=2pt, fill=p1color, draw=none},
      p2/.style={p1, shape border rotate=180, fill=p2color},
      parent/.style={no edge,tikz={\draw (#1.parent anchor) to (!.child anchor);}},
      parentd/.style n args={2}{no edge,tikz={\draw[ultra thick, p#2color] (#1.parent anchor) to (!.child anchor); }},
      nat/.style={},
      terminal/.style={draw=none, inner sep=2pt},
      el/.style n args={3}{edge label={node[midway, fill=white, inner sep=1pt, draw=none, font=\sf\footnotesize] {\action{#1}{#2}{#3}}}},
      d/.style={edge={ultra thick, draw={p#1color}}},
      comment/.style={no edge, draw=none, align=center, font=\tiny\sf},
  }
  \begin{flushright}
  \begin{forest}
      [,nat
        [,p1,el={1}{A}{}
            [,terminal,el={1}{a}{1},d=0]
            [,terminal,el={1}{a}{2}]
            [...,terminal,el={3}{A}{}]
            [...,terminal,el={3}{B}{}]
            [...,terminal,el={3}{C}{}]
        ]
        [,p1,el={1}{B}{}
            [,terminal,el={1}{b}{1},d=0]
            [,terminal,el={1}{b}{2}]
            [...,terminal,el={3}{A}{}]
            [...,terminal,el={3}{B}{}]
            [...,terminal,el={3}{C}{}]
        ]
        [,p1,el={1}{C}{}
            [,terminal,el={1}{c}{1}]
            [,el={1}{c}{2},d=0
                [,p1,el={1}{D}{}
                    [,terminal,el={1}{d}{1}]
                    [,terminal,el={1}{d}{2}]
                    [,el={3}{A}{},d=0
                        [,p1,el={3}{a}{1}
                            [,terminal,el={1}{d}{1},d=0]
                            [,terminal,el={1}{d}{2}]
                        ]
                        [,p1,el={3}{a}{2}
                            [,terminal,el={1}{d}{1}]
                            [,terminal,el={1}{d}{2},d=0]
                        ]
                    ]
                    [...,terminal,el={3}{B}{}]
                    [...,terminal,el={3}{C}{}]
                ]
                [,p1,el={1}{E}{}
                    [,terminal,el={1}{e}{1}]
                    [,terminal,el={1}{e}{2}]
                    [...,terminal,el={3}{A}{}]
                    [,el={3}{B}{},d=0
                        [,p1,el={3}{a}{1}
                            [,terminal,el={1}{e}{1},d=0]
                            [,terminal,el={1}{e}{2}]
                        ]
                        [,p1,el={3}{a}{2}
                            [,terminal,el={1}{e}{1}]
                            [,terminal,el={1}{e}{2},d=0]
                        ]
                    ]
                    [...,terminal,el={3}{C}{}]
                ]
            ]
            [...,terminal,el={3}{A}{}]
            [...,terminal,el={3}{B}{}]
            [...,terminal,el={3}{C}{}]
        ]
      ]
  \end{forest}
  \end{flushright}
  \vspace{-5cm}
  {\footnotesize
  \setlength{\tabcolsep}{1pt}
  \setlength\extrarowheight{1pt}
  \adjustbox{valign=b}{\begin{tabular}{|cc|cc|cc|cc|cc|cc|}
  \hline
      & \action3{$\pmb\Root$}{} & \action3a1 & \action3a2 & \action3b1& \action3b2 & \action3c1 & \action3c2 & \action3d1 & \action3d2 & \action3e1 & \action3e2 \\
        \action1{$\pmb\Root$}{} &1&&&&&&&&&& \\\hline
        \action1a1 &1&&&&&&&&&& \\
        \action1a2 &&&&&&&&&&& \\\hline
        \action1b1 &1&&&&&&&&&& \\
        \action1b2 &&&&&&&&&&& \\\hline
        \action1c1 &&&&&&&&&&& \\
        \action1c2 &1&&&&&&&&&& \\\hline
        \action1d1 &&1&&&&&&&&& \\
        \action1d2 &&&1&&&&&&&& \\\hline
        \action1e1 &&&&1&&&&&&& \\
        \action1e2 &&&&&1&&&&&&\\
    \hline
  \end{tabular}}
  \hspace{-3pt}
  \adjustbox{valign=b}{\begin{tabular}{|c|ccc|cc|}
  \hline
      & \action3A{} & \action3B{} & \action3C{} & \action3D{} & \action3E{} \\\hline
      \action1A{} &&&&& \\
      \action1B{} &&&&& \\
      \action1C{} &&&&& \\\hline
      \action1D{} &1&&&& \\
      \action1E{} &&1&&& \\
    \hline
  \end{tabular}}}
 \caption{A part of the UTC decision problem for \pone corresponding to the same game. Nodes labeled \pone are decision points for \pone; boxes are observation points. ``...'' denotes that the part of the decision problem following that edge has been omitted. Terminal nodes are unmarked. Red edge labels indicate interactions with the mediator; blue edge labels indicate interactions with the game. The profitable untimed deviation discussed in \Cref{sec:example} is indicated by the thick lines. The first action taken in that profiable deviation, \action{1}{c}{2}, is not legal for a timed deviator, because a timed deviator must query the mediator once before taking its first action. The matrices (lower-left corner) are the pair of matrices $(\mat A, \mat B)$  corresponding to that same deviation. All blank entries are 0.}\label{fig:example-dev}
\end{figure*}
\newcommand{\profile}[2]{(\action1a#1, \action1b#2, \action1c1, \action2f#1, \action2g#2)}
In this section, we provide an example in which the UTC deviations are strictly more expressive than the communication deviations. Consider the game in \Cref{fig:example}. The subgames rooted at \util1{D} and \util1E are guessing games, where \pone must guess \ptwo's action, with a large penalty for guessing wrong. Consider the correlated profile that mixes uniformly among the four pure profiles \profile{i}{j} for $i, j \in \{1, 2\}$. In this profile, the information that \pone needs to guess perfectly is contained in the recommendations: the recommendation at \util1{A} tells it how to guess at \util1D, and the recommendation at \util1B tells it how to guess at \util1E. With a communication deviation, \pone cannot access this information in a profitable way, since upon reaching \util1C, \pone must immediately make its first mediator query. Hence, this profile is a communication equilibrium. However, with an {\em untimed} communication deviation, \pone can profit: it should, upon reaching\footnote{The actions/queries \pone makes at \util1A and \util1B are irrelevant, because \changed{\pone only cares about maximizing utility, and} it always gets utility $0$ regardless of what it does. In the depiction of this deviation in \Cref{fig:example-dev}, the deviator always plays action 1 at \util1A and \util1B.} \util1C, play action \action1c2 {\em without making a mediator query}, and then query \util1A if it observes \util1D, and \util1B if it observes \util1E. \changed{This deviation is allowed only due to the untimed nature of UTC deviations allows the deviating player to {\em delay} its query to the mediator until it reaches either \util1D or \util1E. In a {\em timed} communication deviation, this deviation is impossible, because the player must make its first query (\util1A, \util1B, or \util1C) {\em before} reaching \util1D or \util1E, and thus that query cannot be conditioned on which one of \util1D or \util1E will be reached. }

Another example, where the player can profit from making {\em more than one} query, and untimed deviations affects the set of possible equilibrium outcomes, can be found in \Cref{sec:more-examples}.

\section{Regret Minimization on $\dPhi{UTC}$}\label{sec:rm}
\begin{figure}[t] %
    \resizebox{.5\textwidth}{!}{\import{plots}{K45.pgf}}%
    \resizebox{.5\textwidth}{!}{\import{plots}{U222.pgf}}\\[-4mm]
    \resizebox{.5\textwidth}{!}{\import{plots}{L3132.pgf}}%
    \resizebox{.5\textwidth}{!}{\import{plots}{S2123.pgf}}\\[-1cm]
    \caption{Experimental comparison between our dynamics and those of \citet{Farina23:Polynomial} for approximating a linear correlated equilibrium in extensive-form games. Each algorithm was run for a maximum of $100,\!000$ iterations or 6 hours, whichever was hit first. Runs that were terminated due to the time limit are marked with a square \small{$\blacksquare$}.}
    \label{fig:experiments}
\end{figure}
\label{sec:regret}

In this section, we discuss how \Cref{th:main} can be used to construct very efficient $\dPhi{Lin}$-regret minimizers, both in theory and in practice. The key observation we use here is due to \citet{Zhang23:Team_DAG}: they observed that DAG decision problems have a structure that allows them to be expressed as {\em scaled extensions}, allowing the application of the {\em counterfactual regret minimization} (CFR) framework~\cite{Zinkevich07:Regret,Farina19:Regret}:
\begin{theorem}[CFR for $\dPhi{Lin}$, special case of \citealp{Zhang23:Team_DAG}]\label{th:cfr}
    \changed{CFR-based algorithms can be used to construct an external regret minimizer on $\dPhi{UTC}$ (and thus also on $\dPhi{Lin}$)} with $O(d^2 \sqrt{T})$ regret \changed{and $O(d^2)$ per-iteration complexity.} 
\end{theorem}%
\changed{Applying \Cref{th:gordon} now yields:
\begin{theorem}
    CFR-based algorithms can be used to construct a $\dPhi{Lin}$-regret minimizer with $O(d^2 \sqrt{T})$ regret, and per-iteration complexity dominated by the complexity of computing a fixed point of a linear transformation $\phi^{(t)} :\co \mc X \to \co\mc X$.
\end{theorem}
}

As mentioned in the introduction, this significantly improves the per-iteration complexity of linear-swap regret minimization\changed{. Fixed points can be computed by finding a feasible solution to the constraint system $\{ \vec x \in \mc X, \mat A \vec x = \vec x\} $, where $\vec x \in \mc X$ is expressed using the sequence-form constraints~\eqref{eq:sequence-form}. This is a linear program with $O(d)$ variables and constraints, so the LP algorithm of \citet{Cohen21:Solving} yields a fixed-point computation algorithm with runtime $\tilde O(d^\omega)$.}

\changed{For comparison, the algorithm of \citet{Farina23:Polynomial} requires an $\ell_2$ projection onto $\mc X$ on every iteration, which requires solving a convex quadratic program; the authors of that paper derive a bound of $\tilde O(d^{10})$, which, although polynomial, is much slower than our algorithm.} CFR-based algorithms are currently the fastest practical regret minimizers~\cite{Brown19:Solving,Farina21:Connecting}---therefore, showing that our method allows such algorithms to be applied is also a significant practical step. In \Cref{sec:experiments}, we will show empirically that the resulting algorithm is significantly better than the previously-known state of the art, in terms of both per-iteration time complexity and number of iterations.

\section{Experimental Evaluation}\label{sec:experiments}

We empirically investigate the performance of our learning dynamics for linear correlated equilibrium, compared to the recent algorithm by \citet{Farina23:Polynomial}.
We test on four benchmark games:
\begin{itemize}[left=5mm]
    \item 4-player Kuhn poker, a multiplayer variant of the classic benchmark game introduced by \citet{Kuhn50:Simplified}. The deck has 5 cards. This game has $3,\!960$ terminal states.
    \item A ridesharing game, a two-player general-sum game introduced as a benchmark for welfare-maximizing equilibria by \citet{Zhang22:Optimal}. This game has $484$ terminal states.
    \item 3-player Leduc poker, a three-player variant of the classic Leduc poker introduced by \citet{Southey05:Bayes}. Only one bet per round is allowed, and the deck has 6 cards (3 ranks, 2 suits). The game has $4,\!500$ terminal states.
    \item Sheriff of Nottingham, a two-player general-sum game introduced by \citet{Farina19:Correlation} for its richness of equilibrium points. The smuggler has 10 items, a maxmimum bribe of 2, and 2 rounds to bargain. The game has $2,\!376$ terminal states.
\end{itemize}

We run our algorithm based on the UTC polytope, and that of \citet{Farina23:Polynomial} (with the learning rate $\eta=0.1$ as used by the authors), for a limit of $100,\!000$ iterations or 6 hours, whichever is hit first. \changed{Instead of solving linear programs to find the fixed points, we use power iteration, which is faster in practice.}
All experiments were run on the same machine with 32GB of RAM and a processor running at a nominal speed of 2.4GHz. For our learning dynamics, we employed the CFR algorithm instantiated with the regret matching$^+$ \citep{Tammelin14:Solving} regret minimizer at each decision point (see \Cref{th:cfr}).
Experimental results are shown in \Cref{fig:experiments}.

\changed{One of the most appealing features of our algorithm is that allows CFR-based methods to apply. CFR-based methods are the fastest regret minimizers in practice, so it is unsurprising that using them results in better convergence as seen in \Cref{fig:experiments}. Another appealing feature} is that our method sidesteps the need of projecting onto the set of transformations.
This is in contrast with the algorithm of \citet{Farina23:Polynomial}, which requires an expensive projection at every iteration. We observe that this difference results in
a dramatic reduction in iteration runtime between the two algorithms, which we quantify in \Cref{tab:iteration times}.
So, we remark that when accounting for \emph{time} instead of iterations on the x-axis of the plots in \Cref{fig:experiments}, the difference in performance between the algorithms appears even stronger. Such a plot is available in \Cref{app:plots}.

\begin{table}[t]
    \setlength{\tabcolsep}{1mm}
    \centering\begin{tabular}{llrcrrcrr}
        \toprule
        \bf Game              & \qquad\qquad & \multicolumn{3}{r}{\bf Our algorithm} & \multicolumn{3}{r}{\bf \citet{Farina23:Polynomial}}     & \quad \bf Speedup                                         \\
        \midrule
        4-Player Kuhn poker   &              & 5.65ms                                 & $\pm$                                               & 0.30ms & \qquad\qquad 195ms & $\pm$ & 7ms  & 35$\times$ \\
        Ridesharing game      &              & 676\textmu s                                 & $\pm$                                               & 80\textmu s & 160ms              & $\pm$ & 7ms  & 237$\times$ \\
        3-Player Leduc poker  &              & 42.0ms                                  & $\pm$                                               & 0.7ms & 12.1s            & $\pm$ & 1.0s & 287$\times$ \\
        Sheriff of Nottingham &              & 114ms                                 & $\pm$                                               & 16ms  & 50.2s              & $\pm$ & 9.6s & 442$\times$ \\
        \bottomrule
    \end{tabular}
    \caption{Comparison of average time per iteration. For each combination of game instance and algorithm, the mean and standard deviation of the iteration runtime are noted.\\}
    \label{tab:iteration times}
    
    \setlength{\tabcolsep}{1mm}
    \centering\begin{tabular}{lrrrrrrrrrrr}
        \toprule
        \bf Game              &\quad\bf  Target gap & \quad{\bf Our algorithm} & \quad{\bf \citet{Farina23:Polynomial}}     & \quad \bf Speedup                                         \\
        \midrule
        4-Player Kuhn poker   &     $7 \times 10^{-4}$       &  32.8s   &  5h 25m  & 595$\times$ \\
        Ridesharing game      &  $9 \times 10^{-5}$ & 8.89s & 4h 07m & 1667$\times$ \\
        3-Player Leduc poker  & $0.224$ & 2.12s & 6h 00m & 10179$\times$ \\
        Sheriff of Nottingham &  $2.06$ & 2.00s & 6h 00m & 
        10800$\times$ \\
        \bottomrule
    \end{tabular}
    \caption{Comparison of time taken to achieve a particular linear swap equilibrium gap. The gap is whatever gap was achieved by the algorithm of \citet{Farina23:Polynomial} before termination.}
    \label{tab:conv}
\end{table}

\section{Conclusion and Future Research}

In this paper, we have introduced a new representation for the set of linear deviations when the strategy space is sequence form. Our representation connects linear deviations to the mediator-based framework that is more typically used for correlation concepts in extensive-form games, and therefore gives a reasonable game-theoretic interpretation of what linear equilibria represent. It also leads to state-of-the-art no-linear-regret algorithms, both in theory and in practice. Several natural questions remain open:
\begin{enumerate}[left=5mm]
    \item \changed{Is there an algorithm whose swap regret is $\poly(d) \cdot T^c$ for $c < 1$ in extensive-form games? (See also \Cref{sec:related} for some recent progress on this problem.)}
    \item What would be a reasonable definition of {\em untimed communication equilibrium}, as a refinement of {\em communication equilibrium} (see also \Cref{sec:revelation})?
    \item For extensive-form correlated equilibrium, it is possible to achieve $\poly(d) \cdot \log(T)$ regret~\cite{Anagnostides23:NearOptimal}, and to compute exact equilibria in polynomial time~\cite{Huang08:Computing}. Can one extend these results to linear equilibria?
\end{enumerate}

\section*{Acknowledgements}
This material is based on work supported by the Vannevar Bush Faculty
Fellowship ONR N00014-23-1-2876, National Science Foundation grants
RI-2312342 and RI-1901403, ARO award W911NF2210266, and NIH award
A240108S001.
\bibliographystyle{iclr2024_conference}
\bibliography{dairefs}

\newpage
\appendix

\section{Discussion}

Here, we discuss a few points of interest about UTC and linear deviations.

\subsection{The Convex Hull of Pure Deviations}\label{sec:convhull}

In our definitions, we were careful to allow transformations $\phi \in \Phi$ to map from the set of pure strategies, $\mc X$, to its convex hull $\co \mc X$, instead of insisting that every pure strategy map onto another pure strategy. One might ask whether this makes a difference in our definitions. For example, if in \Cref{def:phi-regret} we restrict our attention to $\phi : \mc X \to \mc X$, does the definition change? In symbols, for a given set of transformations $\Phi$, is $\Phi \subseteq \co {\hat \Phi}$ where $\hat \Phi = \{ \phi \in \Phi : \phi(\mc X) \in \mc X\ \forall \vec x \in \mc X\}$? For the other sets of deviations mentioned in this paper (external, swap, trigger, and communication), the answer is already known to be positive.

Our equivalence theorem between UTC and linear deviations gives an answer to this question for the set of linear deviations as well. Since the UTC deviations are defined by a decision problem, every mixed UTC deviation is by definition equivalent to a distribution over pure UTC deviations. That is, the vertices of $\dPhi{UTC}$ are the pure UTC deviations: they map pure strategies $\mc X$ to pure strategies. Since $\dPhi{UTC} = \dPhi{Lin}$, this proves:
\begin{corollary}\label{cor:convhull}
    When $\mc X$ is a sequence-form polytope, the extreme points of $\dPhi{Lin}$ are the linear maps $\phi : \mc X \to \mc X$, \ie, the linear maps that map pure strategies to pure strategies. Thus, $\dPhi{Lin} = \hat\Phi_\textsc{Lin}$
\end{corollary}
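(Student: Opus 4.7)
The plan is to invoke Theorem~\ref{th:main} to identify $\dPhi{Lin}$ with $\dPhi{UTC}$ and then exploit the combinatorial definition of UTC deviations. The first step I would carry out is to verify that every \emph{pure} deviator strategy in the UTC decision problem induces a map $\phi \in \hat\Phi_\textsc{Lin}$: against any pure recommendation $\vec x \in \mc X$, both the observations $a(\vec x, \tilde\jmath)$ the deviator receives and its own action/query choices are fully deterministic, so the sequence of real-game plays the deviator makes is a pure strategy, giving $\phi(\vec x) \in \mc X$. Equivalently, a pure deviation has $0/1$-valued $(\mat A, \mat B)$ matrices in the representation of \eqref{eq:constraint system}.

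Once this is in hand, the rest is a standard polytope argument. Any mixed UTC strategy is a convex combination of pure ones, and by the linearity of $\phi(\vec x) = \mat A \vec x$ in $\mat A$, every $\phi \in \dPhi{UTC} = \dPhi{Lin}$ is a convex combination of elements of $\hat\Phi_\textsc{Lin}$. This gives $\dPhi{Lin} \subseteq \co \hat\Phi_\textsc{Lin}$; the reverse containment is trivial from convexity of $\dPhi{Lin}$, yielding $\dPhi{Lin} = \co \hat\Phi_\textsc{Lin}$. In particular, every extreme point of $\dPhi{Lin}$ must already lie in $\hat\Phi_\textsc{Lin}$.

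To complete the extreme-point characterization, I would show that every $\phi \in \hat\Phi_\textsc{Lin}$ is actually extreme in $\dPhi{Lin}$: if $\phi = \lambda \phi_1 + (1-\lambda)\phi_2$ for $\phi_1, \phi_2 \in \dPhi{Lin}$ and $\lambda \in (0,1)$, then for every $\vec x \in \mc X$ the point $\phi(\vec x)$ is a $0/1$-vertex of $\co \mc X$ that has been written as a convex combination of $\phi_1(\vec x), \phi_2(\vec x) \in \co \mc X$, forcing $\phi_1(\vec x) = \phi_2(\vec x) = \phi(\vec x)$. Since (affine) linear maps are determined by their values on any affinely spanning set, and $\mc X$ affinely spans its own affine hull (on which the maps are defined, per the footnote identifying affine with linear), this coincidence on $\mc X$ propagates to $\phi_1 = \phi_2 = \phi$ globally. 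The only step with genuine combinatorial content is the first, and I do not anticipate any serious technical obstacle beyond a careful unpacking of Definition~\ref{def:utc}.
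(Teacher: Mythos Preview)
Your proposal is correct and follows essentially the same route as the paper: invoke Theorem~\ref{th:main}, observe that pure UTC deviator strategies map pure strategies to pure strategies, and conclude via the decision-problem decomposition that $\dPhi{Lin} = \dPhi{UTC} = \co\hat\Phi_\textsc{Lin}$. You go slightly beyond the paper's one-paragraph justification by also arguing the converse direction (every $\phi\in\hat\Phi_\textsc{Lin}$ is extreme), which is a clean and correct addition the paper leaves implicit.
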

This result is not obvious {\em a priori}.  For example, it fails to generalize to other sets of functions $\Phi$, or to $\dPhi{Lin}$ for polytopes $\mc X$ that are not sequence-form polytopes:
\begin{itemize}
    \item Other sets of functions $\Phi$: Let $\Phi = \{ \phi \}$ consist of a single constant function $\phi : \vec x \mapsto \vec x^*$, where $\vec x^* \in (\co \mc X) \setminus \mc X$. Then $\hat\Phi$ is empty, so $\Phi \not\subseteq \co \hat\Phi$.
    \item Non-sequence-form polytopes: Take $\mc X$ to be a trapezoid $ABCD$ where $AB$ is the longer of the two bases and consider the linear map $\phi$ with $\phi(A) = D, \phi(D) = A,$ and $\phi(B) = C$. Then $\phi$ is an extreme point of $\dPhi{Lin}$, but $\phi(C)$ will lie somewhere along segment $AB$, but at neither endpoint---that is, not at a vertex. \Cref{fig:trapezoid} has a visual depiction.
\end{itemize}

\begin{figure}[H]
    \centering
    \begin{tikzpicture}
        \draw (0,0) node[circle, fill=black, inner sep=1pt, label=below:{$A=\phi(D)$}, name=A] {};
        \draw (8,0) node[circle, fill=black, inner sep=1pt, label=below:{$B$}, name=B] {};
        \draw (4,2) node[circle, fill=black, inner sep=1pt, label={$C=\phi(B)$}, name=C] {};
        \draw (0,2) node[circle, fill=black, inner sep=1pt, label={$D=\phi(A)$}, name=D] {};
        \draw (2,0) node[circle, fill=black, inner sep=1pt, label=below:{$\phi(C)$}, name=fC] {};
        \draw [p1color,ultra thick] (A.center) -- (B.center) -- (C.center) -- (D.center) -- cycle; 
        \draw [p2color, thick] (D.center) -- (C.center) -- (fC.center) -- (A.center) -- cycle;
    \end{tikzpicture}
    \caption{A visual depiction of the argument that \Cref{cor:convhull} cannot generalize to all polytopes. The affine map $\phi$ maps the large blue polygon onto the small orange polygon, and $\phi$ is a vertex of the set of linear maps from polygon $ABCD$ to itself, yet $\phi(C)$ is not a vertex of $ABCD$.}\label{fig:trapezoid}
\end{figure}

\subsection{Generalization to Arbitrary Pairs of Polytopes}\label{sec:generalization}
Our main result characterizes the set of linear maps $\phi : \mc X \to \mc X$ for sequence-form polytopes $\mc X$. However, it is actually more general than this: an identical proof works to characterize the set of linear maps $\phi : \mc Y \to \co \mc X$ for any (possibly different!) sequence-form polytopes $\mc X$ and $\mc Y$. Hence, we have shown:
\begin{theorem}\label{th:general pair}
    Let $\mc X, \mc Y$ be sequence-form strategy sets. The linear maps $\phi : \mc Y \to \co \mc X$ are precisely the functions induced by strategies in the DAG decision problem whose nodes are identified with pairs $(s, \tilde s)$, where $s \in \Sigma_{\mc X} \cup \mc J_{\mc X}$ and $\tilde s \in \Sigma_{\mc Y} \cup \mc J_{\mc Y}$, and which behaves analogously to \Cref{def:utc}.
\end{theorem}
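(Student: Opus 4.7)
The plan is to observe that the proof of \Cref{th:main} never really uses the equality $\mc X=\mc Y$: every step in it treats the two coordinates of the pair $(s,\tilde s)$ independently, with the first coordinate governed by the tree structure of $\mc X$ and the second by the tree structure of $\mc X$ only in its role as the mediator's strategy space. I would therefore replay that argument essentially verbatim, simply decorating the two sides with distinct labels. Concretely, I would set up the generalized constraint system with $\mat A\in\R^{\Sigma_{\mc X}\times\Sigma_{\mc Y}}$ and $\mat B\in\R^{\mc J_{\mc X}\times\mc J_{\mc Y}}$ subject to
\begin{align*}
    \mat A(p_j,\tilde\sigma)+\mat B(j,p_{\tilde\sigma}) &= \sum_{a\in A_j}\mat A(ja,\tilde\sigma)+\sum_{\tilde\jmath\in C_{\tilde\sigma}}\mat B(j,\tilde\jmath) \text{ for all } j\in\mc J_{\mc X},\,\tilde\sigma\in\Sigma_{\mc Y}, \\
    \mat A(\Root_{\mc X},\Root_{\mc Y}) &= 1, \quad \mat A(\Root_{\mc X},\tilde\sigma)=0 \text{ for } \tilde\sigma\neq\Root_{\mc Y}, \quad \mat A,\mat B\ge 0,
\end{align*}
i.e., exactly the system \eqref{eq:constraint system} with $\mc J$ replaced by $\mc J_{\mc X}$ and $\Sigma$ by $\Sigma_{\mc Y}$ in the appropriate places (with the convention $\mat B(j,p_{\Root_{\mc Y}}):=0$).

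For the forward direction, I would verify that the sequence form of any pure strategy in the generalized DAG decision problem---defined exactly as in \Cref{def:utc} but with $s\in\Sigma_{\mc X}\cup\mc J_{\mc X}$ and $\tilde s\in\Sigma_{\mc Y}\cup\mc J_{\mc Y}$---satisfies the above system. The three node-type cases carry over unchanged: at $(j,\tilde\sigma)\in\mc J_{\mc X}\times\Sigma_{\mc Y}$, the outgoing flow splits among action children (indexed by $A_j$, coming from $\mc X$) and query children (indexed by $C_{\tilde\sigma}$, coming from $\mc Y$), yielding exactly the stated balance equation. Convexifying over pure strategies gives mixed deviations, and the induced map is $\phi(\vec y)=\mat A\vec y$, hence linear.

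For the reverse direction, I would take a linear map $\phi:\co\mc Y\to\co\mc X$ with matrix $\mat A$ and construct a nonnegative $\mat B$ by recursively solving the balance equations upward along the $\mc Y$-tree, exactly as in the proof of \Cref{th:main}. The key structural fact from that proof still applies: since $\mat A\vec y\in\co\mc X$ for every pure $\vec y\in\mc Y$, for each fixed $j\in\mc J_{\mc X}$ the $\mc Y$-indexed functional $\tilde\sigma\mapsto\sum_{a\in A_j}\mat A(ja,\tilde\sigma)-\mat A(p_j,\tilde\sigma)$ vanishes on $\co\mc Y$, and hence lies in the linear span of the $\mc Y$-side sequence-form equations; the coefficients of this expansion are precisely the desired entries $\mat B(j,\cdot)$. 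The main obstacle, inherited from \Cref{th:main}, is showing nonnegativity of the resulting $\mat B$. However, the original nonnegativity argument only invokes (i) the tree structure of the mediator's side and (ii) the fact that $\mat A$ carries pure mediator strategies into the sequence-form polytope of the target, never any identification between the two trees. Thus the proof of \Cref{th:main} transports verbatim to establish \Cref{th:general pair}.
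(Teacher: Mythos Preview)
Your proposal is correct and matches the paper's approach exactly: the paper does not give a separate proof of \Cref{th:general pair} at all, stating only that ``an identical proof works'' when the two sequence-form polytopes are allowed to differ. Your write-up is, if anything, more detailed than the paper's one-line justification, and correctly identifies that nothing in the proof of \Cref{th:main}---including the normalization via \Cref{lem:representation} and the recursive construction of $\mat B$ with its nonnegativity argument---couples the two trees.
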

Although we are mostly concerned with the case $\mc X = \mc Y$ in this paper, we state this extension in the hope that it may be of independent interest. We will also use it in the proof of the revelation principle (\Cref{th:rp}).

\subsection{Uniqueness of Representation}

The statement of \Cref{th:main} discusses $\dPhi{UTC}$ and $\dPhi{Lin}$ as {\em sets of functions} $\Phi \subseteq (\co\mc X)^{\mc X}$. It does {\em not} imply that for every linear map $\phi : \mc X \to \co\mc X$ there is {\em exactly} one representation of $\phi$ as a deviator strategy in the UTC decision problem, only that there is {\em at least} one representation. Indeed, the external deviations (constant functions $\phi : \vec x \mapsto \vec x^*$ for fixed $\vec x^* \in \co\mc X$ can be represented via a large number of different strategies in the UTC decision problem: the deviator may send any number of queries to the mediator, before eventually deciding to ignore the queries and play according to $\vec x^*$, and such a deviator would still represent the external deviation $\phi$.

Similarly, \Cref{th:main} also does not imply that every matrix $\mat A \in \R^{\Sigma \times \Sigma}$ representing a linear map $\phi_{\mat A} : \mc X \to \co\mc X$ is part of a pair $(\mat A, \mat B)$ satisfying the system of equations \eqref{eq:constraint system}. Indeed, the proof of \Cref{th:main} only shows that for every linear $\phi : \mc X \to\co\mc X$, there is {\em at least one} pair $(\mat A, \mat B)$ satisfying \eqref{eq:constraint system} where $\mat A$ represents $\phi$. It is easy to construct matrices $\mat A$ that represent linear maps, yet cannot satisfy \eqref{eq:constraint system}, by changing the first row of $\mat A$ to some other vector $\vec c$ with $\vec c^\top \vec x = 1$ for all $\vec x \in \mc X$.

\subsection{When Untimed and Timed Communication Deviations Coincide}

If all players have only one layer of decision nodes, the game is a single-stage {\em Bayesian game}---in that special case, the communication deviations and linear deviations will coincide\footnote{Here, by two sets of deviations {\em coinciding}, we mean that the same set of deviation functions $\phi : \mc X \to \co \mc X$ is available to both deviators.}. This property was also proven by \citet{Fujii23:Bayes}, but our framework gives a particularly simple proof via \Cref{th:main}: for any UTC deviation in a single-stage game, the deviator makes either no queries or one query to the mediator. A communication deviator can simulate the same function by making the same query (if any), or, if the UTC deviator makes no query, by making an arbitrary query and ignoring the reply. It turns out that the converse is also essentially true:
\begin{theorem}
    Consider any decision problem with no nontrivial decision points---that is, the player has at least two legal actions at every decision point. The (timed) communication deviations coincide with the untimed communication deviations (and hence also the linear deviations) {\em if and only if} every path through the decision problem contains at most one decision point.
\end{theorem}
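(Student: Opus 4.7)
My plan is to prove the two directions of the biconditional separately.

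For the "if" direction, assume every root-to-leaf path has at most one real decision point; only $\dPhi{UTC} \subseteq \dPhi{Comm}$ needs argument. Because the mediator's tree mirrors the player's, it too has at most one decision point per path, so after any single query $\tilde\jmath \in C_\Root$ the mediator's state advances to $\tilde\jmath\tilde a$, which sits below a decision point and hence (by hypothesis) has no descendant decision points to query; thus $C_{\tilde\jmath\tilde a} = \varnothing$ and no further queries are possible. At the unique real decision point $j$ on any path, a UTC deviator's options therefore reduce to (a) play some $a \in A_j$ without querying, or (b) query some $\tilde\jmath \in C_\Root$, observe a recommendation, and then play. A communication deviator realizes option (b) verbatim and simulates option (a) by querying arbitrarily and discarding the response, so every pure UTC deviation is a pure communication deviation; the inclusion follows by convexity.

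For the "only if" direction, suppose some path contains two real decision points. Pick such a path and two decision points $j_1, j_2$ on it; since the path begins at $\Root$, I may assume $j_1 \in C_\Root$, and then $j_2 \in C_{j_1 a^*}$ for some $a^* \in A_{j_1}$. By nontriviality, fix $a^\dagger \in A_{j_1}\setminus\{a^*\}$ and distinct $b_1, b_2 \in A_{j_2}$. Define a pure UTC deviator that, at $(j_1, \Root)$, first queries $j_1$ to observe $\tilde a$; if $\tilde a = a^*$, it then queries $j_2$ (legal because the mediator state is now $j_1 a^*$) to observe $\tilde b$, and plays $a^*$ at $j_1$ if $\tilde b = b_1$ and $a^\dagger$ otherwise; if $\tilde a \ne a^*$, it plays $a^\dagger$ directly. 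Let $\phi_\mathrm{UTC}$ denote the induced deviation function.

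Now choose pure sequence-form strategies $\vec x, \vec x'$ that agree outside the subtree rooted at $j_2$, with $\vec x(j_1 a^*) = \vec x'(j_1 a^*) = 1$, $\vec x(j_2 b_1) = 1$, and $\vec x'(j_2 b_2) = 1$. By construction $\phi_\mathrm{UTC}(\vec x)(j_1 a^*) = 1$ while $\phi_\mathrm{UTC}(\vec x')(j_1 a^*) = 0$. However, any pure communication deviator at $j_1$ has mediator state $\Root$, queries a single $\tilde\jmath \in C_\Root$, and plays at $j_1$ as a function only of $a(\vec x, \tilde\jmath)$; since every $\tilde\jmath \in C_\Root$ lies outside the subtree below $j_2$, this recommendation is identical for $\vec x$ and $\vec x'$, and averaging over pure strategies yields $\phi_\mathrm{Comm}(\vec x)(j_1 a^*) = \phi_\mathrm{Comm}(\vec x')(j_1 a^*)$ for every $\phi_\mathrm{Comm} \in \dPhi{Comm}$. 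Hence $\phi_\mathrm{UTC} \notin \dPhi{Comm}$, establishing strict inclusion.

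The main delicate step is the "no further queries" claim in the easy direction, which requires a careful walk through the DAG of \Cref{def:utc} to verify that after a single query the mediator's branch really is a dead end when paths are short. Once that is in hand, the rest of the easy direction is bookkeeping, and the hard direction is a clean information-separation argument: UTC deviators can chain queries to inspect $j_2$'s recommendation before committing at $j_1$, whereas communication deviators at $j_1$ see exactly one $C_\Root$-level recommendation, which by the locality of $\vec x$ versus $\vec x'$ cannot distinguish them.
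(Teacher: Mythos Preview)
Your proof is correct and follows essentially the same approach as the paper's: for the ``if'' direction, both observe that a single-layer decision problem limits the UTC deviator to at most one query, which a timed communication deviator simulates (possibly via a dummy query); for the ``only if'' direction, both exhibit a UTC deviation whose action at the shallower decision point depends on the recommendation at the deeper one, which a timed deviator cannot replicate. Your write-up is considerably more explicit---constructing test strategies $\vec x, \vec x'$ and spelling out the information-separation argument---whereas the paper's proof is a terse two-line sketch, but the underlying idea is identical.
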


\begin{proof}
    The {\em if} direction was shown above and by \citet{Fujii23:Bayes}, so it suffices to show the {\em only if} direction. Suppose there are two decision points, $A$ and $B$, such that $B$ is a child of action $a_1$ at $A$. Let $a_2$ be another action at $A$, and let $b_1$ and $b_2$ be two actions at $B$. (The game in \Cref{fig:example2} has such a structure). Consider any deviation $\phi$ that plays action $a_i$ if it is recommended $b_i$, for $i \in \{ 1, 2 \}$. It is easy to construct untimed deviations with this behavior, but timed deviations cannot have this behavior, because a timed deviation cannot know the recommendation at $B$ while still at decision point $A$.
\end{proof}

\subsection{Relation between Our Representation and That of \citet{Farina23:Polynomial}}
Our paper and the paper of \citet{Farina23:Polynomial} both take similar approaches to minimizing $\dPhi{Lin}$-regret: both papers use the framework of \citet{Gordon08:No} to reduce the problem to minimizing external regret over the set of linear maps, and then derive a system of constraints for the set of matrices $\mat A \in \R^{\Sigma \times \Sigma}$  that represent linear maps. The representations are, however, significantly different:
\begin{itemize}
    \item The representation of \citet{Farina23:Polynomial} cannot be expressed in scaled extensions. As such, that paper was forced to resort to less efficient regret minimization techniques. This difference is what allows us to improve upon their results.
    \item As a technical note, the representation of \citet{Farina23:Polynomial} will always result in a matrix $\mat A \in \R^{\Sigma \times \Sigma}$ where the columns of $\mat A$ corresponding to nonterminal sequences are filled with zeros. While this is without loss of generality due to the constraints defining the sequence form, it sometimes results in intuitively-strange representations: for example, their representation does not represent the identity map $\op{Id} : \mc X \to \mc X$ as the identity matrix $\mat I \in \R^{\Sigma \times \Sigma}$, whereas our representation will.
    \item While our representation generalizes to arbitrary pairs of sequence-form polytopes according to \Cref{sec:generalization}, theirs generalizes even further, to functions $\phi : \mc Y \to \mc X$ as long as $\mc Y$ is sequence form and $\mc X$ has {\em some} small set of linear constraints (not necessarily sequence form) describing it. We likely cannot hope for our representation to generalize as far: our proof of equivalence relies fundamentally on both input and output being sequence-form strategy sets.
\end{itemize}

\subsection{Untimed Communication Equilibria}\label{sec:revelation}
The UTC deviations, like all sets of deviations, give rise to a notion of equilibrium. We define:
\begin{definition}
    In an extensive-form game, an {\em untimed private communication equilibrium} is a correlated profile that is a $(\Phi_i)$-equilibrium where $\Phi_i$ is player $i$'s set of UTC deviations.
\end{definition}

We add the word ``private'' here in the name to emphasize the fact that the mediator must have a separate interaction with each player---that is, the mediator cannot use its interactions with one player to inform how it gives recommendations to another player. This is enforced by the fact that the equilibrium is a correlated profile. See \Cref{footnote:comm} regarding why this distinction is important. 

Defining untimed communication equilibrium without such a privacy restriction seems to be a subtle task, and is orthogonal to and beyond the scope of the present work. \changed{However, we will make a few informal comments here. Untimed communication equilibria (without the privacy constraint) are difficult to define in a way that does not quickly collapse to the regular notion of communication equilibrium. In games with three or more players, the mediator is always guaranteed that two of the players have not deviated, and those two players will have messages synchronized with the game clock. Therefore, under reasonable assumptions on how often each player makes moves, the mediator will immediately know if the deviating player is sending out-of-order messages, and this concept would reduce immediately to the regular communication equilibrium. It is entirely unclear how to define a notion of untimed (non-private) communication equilibrium that does not exhibit such a collapse.}

\changed{In two-player games, it is possible that there is a reasonable way to define untimed communication equilibria. The above collapse does not apply, because the mediator will not know which player is the one sending out-of-timing messages. However, this definition would still be rather subtle---for example, when do the out-of-order messages arrive to the mediator, relative to the {\em other player's} messages? We leave these issues to future work.}

\subsection{Revelation Principle for Untimed Private Commnunication Equilibrium}
All the other notions of equilibrium involving a mediator, discussed in \Cref{sec:mediator}, obey a {\em revelation principle}, which we now discuss using the example of normal-form correlated equilibrium. The original definition of \citet{Aumann74:Subjectivity} did not initially refer to correlated profiles; instead, the definition posited an arbitrary joint distribution of correlated signals. In this section, we break down the notions of equilibrium that we have defined so far, and reconstruct them from the perspective of this arbitrary set of signals, and show that the resulting notions are equivalent for our notion of untimed private communication equilibrium.

We start with NFCE as an illustrative example. Let $\pi \in \Delta(\mc S_1 \times \dots \times \mc S_n)$, where $\mc S_i$ is an arbitrary set of signals for player $i$. The mediator samples a joint signal $(s_1, \dots, s_n) \sim \pi$, and then each player privately observes its own signal $s_i$ and selects (possibly at random) a strategy $\vec x \in \mc X$. An NFCE is then a tuple $(\pi, \phi_1, \dots, \phi_n)$, where $\phi_i : \mc S_i \to \co \mc X$ is the function by which player $i$ selects its strategy given a signal, such that each player's choice of $\phi_i$ maximizes that player's utility given $\pi$ and the other $\phi_i$s, that is,
\begin{align}
    \E_{(s_1, \dots, s_n) \sim \pi}[ u_i(\phi_i'(s_i), \phi_{-i}(s_{-i}) - u_i(\phi_i(s_i), \phi_{-i}(s_{-i})] \le 0
\end{align}
for every player $i$ and other possible function $\phi_i' : \mc S_i \to \co \mc X$. An NFCE is {\em direct} if $\mc S_i = \mc X_i$ and $\phi_i : \mc X_i \to \co \mc X_i$ is the identity function.
The {\em revelation principle} states that every NFCE is outcome-equivalent\footnote{By {\em outcome-equivalent}, we mean that the distribution over terminal nodes in the extensive-form game is the same in both equilibria.} to a direct equilibrium.

To generalize this to extensive-form and communication equilibria (timed and untimed), we follow the approach of \citet{Myerson86:Multistage,Forges86:Approach}. In their approach, a mediator of player $i$ is a map $M_i : \mc S_i^{\le H} \to \mc S_i$ (where $\mc S_i^{\le H}$ is the set of all sequences over $\mc S_i$ of length $\le H$, and $H$ is some large but finite number (at least the depth of player $i$'s decision problem.) that determines what message the mediator sends in reply to a player whose message history with the mediator is a finite sequence $\vec s = \{ s_i \}$. We will assume that $\mc S_i$ at least is expressive enough to send an empty message, a decision point, or an observation point: $\mc S_i \supseteq \mc J_i \sqcup \Sigma_i \sqcup \{ \bot \}$. The three notions of extensive-form correlated equilibrium, private communication equilibrium, and untimed private communication equilibrium will differ in how the player interacts with the mediator. We will describe player $i$'s interactions by a set of functions $\Phi_i  \subseteq (\co \mc X_i)^{\mc M_i}$ where $\mc M_i$ is a set of mediators: each function $\phi_i \in \Phi_i$ represents the player $i$ choosing how it interacts with the mediator and how it uses those interactions to inform its choices of action. Then, as before, an equilibrium is a tuple $(\pi, \phi_1, \dots, \phi_n)$ where $\phi \in \Delta(\mc M_1 \times \dots \mc M_n)$ is a distribution over mediators and no player $i$ can profit by switching to a different $\phi_i' \in \Phi_i$. The three notions above then differ in the choice of set $\Phi_i$:
\begin{itemize}
    \item Extensive-form correlated equilibria are equilibria where $\Phi_i$ is the set of interactions in which the player, upon reaching a decision point $j$, must send that decision point to the mediator.
    \item Private communication equilibria are equilibria  where $\Phi_i$ is the set of interactions in which the player, upon reaching a decision point $j$, must send a single message (which may or may not be the decision point $j$) to the mediator.
    \item Untimed private communication equilibria are equilibria  where $\Phi_i$ is the set of interactions in which the player, upon reaching a decision point $j$, may send any number of messages to the mediator.
\end{itemize}

The {\em direct mediator} $M_i^{\vec {\vec x_i}}$ for a pure strategy $\vec x_i \in \mc X_i$ is the mediator who acts by sending the recommendation $a$ at infoset $j$ if and only if the message history matches the $\Root_i \to j$ path, otherwise $\bot$. Formally, $M_i^{\vec x_i}(\vec s) = a(\vec x_i, j)$ if $\vec s = (j^{(1)}, a^{(1)}, j^{(2)}, \dots, j)$ is the path to $j$ in player $i$'s decision tree, and $\bot$ otherwise. We write $\mc M_i^* := \{ M_i^{\vec x_i} \mid  \vec x_i \in \mc X_i\}$  for the set of direct mediators on $\mc X_i$. Notice that, for direct $M_i$, the sets of interactions valid for each of the three equilibrium notions reduces to the sets of deviations defined in \Cref{sec:mediator}. Analogous to the NFCE case, an equilibrium $(\pi, \phi_1^*, \dots, \phi_n^*)$ (in any of the previous three notions) is called {\em direct} if $\pi$ is a distribution over direct mediators, and $\phi_i^*$ is the map $M_i^{\vec x_i} \mapsto \vec x_i$ (which is the analogy of the identity map). We are now ready to state the revelation principle for these notions.
\begin{theorem}[Revelation principle: for EFCE, proven by \citet{Stengel08:Extensive}; for communication equilibrium, proven by \citet{Myerson86:Multistage,Forges86:Approach} and refined by \citet{Zhang22:Polynomial}]
    For EFCE and (private) communication equilibrium, every equilibrium is outcome-equivalent to a direct equilibrium.
\end{theorem}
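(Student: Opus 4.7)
The plan is to show that any equilibrium with arbitrary mediators can be ``compiled'' into a direct equilibrium with the same outcome distribution by absorbing each $\phi_i^*$ into the mediator. Fix an equilibrium $(\pi, \phi_1^*, \dots, \phi_n^*)$. For each profile of mediators $(M_1, \dots, M_n) \sim \pi$, the strategy $\phi_i^*(M_i) \in \co\mc X_i$ induces a (possibly randomized) pure strategy $\vec x_i(M_i) \in \mc X_i$. Define the direct distribution $\pi^* \in \Delta(\mc M_1^* \times \dots \times \mc M_n^*)$ as the pushforward of $\pi$ (together with the internal randomness of each $\phi_i^*$) under the map $(M_1, \dots, M_n) \mapsto (M_1^{\vec x_1(M_1)}, \dots, M_n^{\vec x_n(M_n)})$. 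Pair $\pi^*$ with the ``identity'' direct interaction $\phi_i^* : M_i^{\vec x_i} \mapsto \vec x_i$. By construction the terminal-node distribution is identical under $(\pi, \phi_1^*, \dots, \phi_n^*)$ and under $(\pi^*, \phi_1^*, \dots, \phi_n^*)$, since both sample the same joint distribution over pure strategy profiles.

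The heart of the proof is checking that the direct tuple is itself an equilibrium, \emph{i.e.}, no player can profit by switching to some alternative interaction $\hat\phi_i \in \Phi_i$ with the direct mediator $M_i^{\vec x_i}$. The idea is to show that any such deviation $\hat\phi_i$ can be simulated by a deviation $\hat\phi_i'$ against the original indirect mediator $M_i$, attaining the same expected utility. Concretely, player $i$ first runs the internal computation that $M_i^{\vec x_i}$ would perform (\emph{i.e.}, interact with $M_i$ via $\phi_i^*$ to recover the pure strategy $\vec x_i(M_i)$), and then behaves toward the game as $\hat\phi_i$ would against the direct mediator $M_i^{\vec x_i}$. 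The resulting joint distribution over (signals, actions) matches that of the direct deviation, so the expected utility matches as well; hence a profitable deviation against the direct mediator would contradict the equilibrium condition for $(\pi, \phi_1^*, \dots, \phi_n^*)$.

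The step that requires genuine care, and which I expect to be the main obstacle, is verifying that $\hat\phi_i'$ actually lies in $\Phi_i$, the set of deviations allowed in the given equilibrium concept. For EFCE this is immediate: the direct simulation only needs to send the decision point $j$ encountered in the game to the real mediator (via $\phi_i^*$) and then possibly override the recommended action, which is exactly an EFCE deviation. For (timed) communication equilibrium the query-per-decision-point discipline must be matched: each decision point reached in the game causes $\hat\phi_i$ to send exactly one message to $M_i^{\vec x_i}$, and reconstructing $\vec x_i(M_i)$ via $\phi_i^*$ can be folded into that single message (possibly extending $\mc S_i$ if necessary), so $\hat\phi_i'$ sends exactly one message per decision point to $M_i$ as required. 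These bookkeeping checks -- that the timing and messaging constraints of each equilibrium concept are preserved under the simulation -- are the technical core of the argument.

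Finally, one should note that the same simulation argument would \emph{fail} if one tried to push it through for untimed private communication equilibria using only direct mediators, because compressing an arbitrary untimed interaction $\phi_i^*$ into a single ``recommended strategy'' loses the ability of $\hat\phi_i$ to issue multiple queries at a single decision point. This is precisely what motivates deferring the untimed version to a separate theorem (\Cref{th:rp}) whose proof will invoke the UTC/linear equivalence (\Cref{th:main}) together with \Cref{th:general pair} to construct an appropriately richer notion of ``direct'' mediator.
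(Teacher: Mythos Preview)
The paper does not supply its own proof of this statement: it is stated with attribution to \citet{Stengel08:Extensive}, \citet{Myerson86:Multistage}, \citet{Forges86:Approach}, and \citet{Zhang22:Polynomial}, and no argument follows. So there is nothing in the paper to compare your proposal against directly. Your composition/simulation plan is the standard route taken in those references, and for EFCE it is essentially complete as written.

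Two remarks. First, your second paragraph's phrasing (``first runs the internal computation \dots to recover the pure strategy $\vec x_i(M_i)$'') suggests recovering all of $\vec x_i$ up front, which is impossible under the timing constraints; the correct picture, which your third paragraph does adopt, is incremental---at each decision point you simulate just enough of $\phi_i^*$ to answer $\hat\phi_i$'s current query. For timed communication the real work is checking that the depths line up: the direct-mediator state tracked by $\hat\phi_i$ advances by at most one per game step, so (after arguing WLOG that $\hat\phi_i$ never wastes a query on a message returning $\bot$) $\hat\phi_i'$ has exactly one message slot per step in which to replay $\phi_i^*$'s protocol along the fictitious path $j'_1, j'_2, \dots$, and the resulting message histories with $M_i$ coincide. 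Your phrase ``folded into that single message'' hides exactly this.

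Second, your final paragraph misdiagnoses the untimed case. The simulation argument does not fail there: an untimed $\hat\phi_i'$ may issue however many queries to $M_i$ are needed to replay $\phi_i^*$'s (possibly multi-query) protocol at each $j'_\ell$, and the direct mediator $M_i^{\vec x_i}$ can itself be queried multiple times per decision point. In fact the paper's proof of \Cref{th:rp} \emph{is} your composition argument, just packaged through linearity: by \Cref{th:general pair} each $\Phi_i$ is exactly the set of linear maps $\mc Y_i \to \co\mc X_i$, so $\psi_i := \phi_i' \circ \phi_i$ is linear and hence lies in $\Phi_i$---a one-line closure property replacing the operational simulation. The untimed case is separated into \Cref{th:rp} because this linear viewpoint is the paper's contribution, not because the concrete simulation breaks.
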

Our main result in this section is that the same holds for untimed private communication equilibrium:
\begin{restatable}[Revelation principle for untimed private communication equilibrium]{theorem}{thRP}\label{th:rp}
    Every untimed private communication equilibrium is outcome-equivalent to a direct untimed private communication equilibrium.
\end{restatable}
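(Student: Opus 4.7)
The plan is to mimic the classical revelation-principle construction and push the equilibrium forward to act on direct mediators. Given an untimed private communication equilibrium $(\pi, \phi_1, \ldots, \phi_n)$, I would define $\pi^* \in \Delta(\mc M_1^* \times \cdots \times \mc M_n^*)$ by first sampling $(M_1, \ldots, M_n) \sim \pi$, then for each player $i$ independently drawing $\vec x_i$ from some decomposition of $\phi_i(M_i) \in \co \mc X_i$ into pure strategies, and finally outputting $(M_1^{\vec x_1}, \ldots, M_n^{\vec x_n})$. Each $\phi_i^*$ would be the trivial ``play what you are recommended'' map $M_i^{\vec x_i} \mapsto \vec x_i$. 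Outcome equivalence is then immediate by construction: both equilibria induce the same joint distribution over pure-strategy profiles $(\vec x_1, \ldots, \vec x_n)$, and hence the same distribution over terminal nodes. Note that in the original equilibrium, conditional on $(M_1, \dots, M_n)$, each $\phi_i(M_i)$ depends only on $M_i$, so the conditional independence needed for the sampling step holds automatically.

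To verify that the new profile is itself an equilibrium, I would fix any player $i$ and any alternative UTC interaction $\phi_i' \in \Phi_i$, and produce a deviation $\tilde\phi_i \in \Phi_i$ in the original equilibrium with the same payoff. The key observation is that the restriction $\psi_i : \mc X_i \to \co \mc X_i$ defined by $\psi_i(\vec x_i) := \phi_i'(M_i^{\vec x_i})$ is precisely a UTC deviation in the sense of \Cref{def:utc}: the interactive structure of $\phi_i'$ against a direct mediator $M_i^{\vec x_i}$ coincides with that of \Cref{def:utc} when the mediator holds pure strategy $\vec x_i$. Hence by \Cref{th:main}, $\psi_i$ is linear and extends uniquely to $\co \mc X_i \to \co \mc X_i$. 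I would then define $\tilde\phi_i$ by having the player forward to $M_i$ exactly the messages $\phi_i$ would send at each of its own decision points, while internally tracking the conditional distribution over $\vec x_i$ that $\phi_i$ would have induced and sampling each action from $\psi_i$ applied to this distribution. By linearity of $\psi_i$, this on-the-fly procedure yields $\tilde\phi_i(M_i) = \psi_i(\phi_i(M_i))$, so player $i$'s payoff under $\phi_i'$ in the direct equilibrium equals the payoff under $\tilde\phi_i$ in the original; since the latter is an equilibrium, $\phi_i'$ cannot strictly improve utility.

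The main obstacle is precisely in the construction of $\tilde\phi_i$ as a legal UTC interaction. A naive ``first sample $\vec x_i \sim \phi_i(M_i)$, then run $\phi_i'$ against $M_i^{\vec x_i}$'' implementation would appear to require the player to commit to a pure $\vec x_i$ before all observations along its own play path have arrived, which is neither causal nor UTC-legal in general. \Cref{th:main} is exactly what circumvents this obstruction: because $\psi_i$ is linear, the mixed strategy $\psi_i(\phi_i(M_i))$ can be realized one decision point at a time using only the messages $\phi_i$ itself would send, without any additional external communication or premature temporal commitment. In effect, the linearity packaged in \Cref{th:main} is what lets the outer ``simulation'' of $\phi_i$ and the inner ``simulation'' of $\phi_i'$ be interleaved causally within a single UTC interaction.
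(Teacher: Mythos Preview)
Your overall structure matches the paper's proof exactly: the same pushforward construction of $\pi^*$, the same identification of the deviation $\phi_i'$ against direct mediators with a map $\psi_i:\mc X_i\to\co\mc X_i$ (linear by \Cref{th:main}), and the same composition idea $\tilde\phi_i=\psi_i\circ\phi_i$ to produce a deviation against the original profile.

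The difference lies entirely in how you argue that $\tilde\phi_i\in\Phi_i$. The paper does this in one line: it first observes (via \Cref{th:general pair}, the generalization of \Cref{th:main} to arbitrary pairs of sequence-form polytopes) that the set $\Phi_i$ of untimed interactions against a general mediator is \emph{exactly} the set of linear maps $\mc Y_i\to\co\mc X_i$, where $\mc Y_i$ is the tree-form decision problem encoding the mediator's message-response behavior. Then $\phi_i$ is linear because $\phi_i\in\Phi_i$, $\psi_i$ is linear by \Cref{th:main}, and the composition of linear maps is linear---so $\tilde\phi_i\in\Phi_i$ immediately, with no operational construction at all.

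Your attempt to build $\tilde\phi_i$ as an explicit interaction protocol is where the argument becomes genuinely incomplete. The instruction ``forward to $M_i$ exactly the messages $\phi_i$ would send at each of its own decision points'' is not well-defined: $\phi_i$'s querying behavior at mediator state $\tilde s$ depends on which real decision point $\phi_i$ \emph{believes} it is at, and that belief is determined by the actions $\phi_i$ would have played---not by the actions $\tilde\phi_i$ actually plays. So once the deviator's play diverges from $\phi_i$'s, there is no canonical ``message $\phi_i$ would send.'' You correctly flag this causality obstruction in your last paragraph, but the appeal to linearity of $\psi_i$ alone does not resolve it; what is needed is precisely the converse direction of \Cref{th:general pair} (every linear map $\mc Y_i\to\co\mc X_i$ is realized by some untimed interaction), which you are effectively trying to re-prove inline without saying so. Invoking that result directly closes the gap cleanly.
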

\begin{proof}
    Let $(\pi, \phi_1, \dots, \phi_n)$ be some (possibly indirect) equilibrium. Observe that we can view the mediator as holding a strategy $\vec y \in \mc Y_i$, where $\mc Y_i$ is the decision problem whose nodes correspond to sequences $\vec s \in \mc S_i^{\le H}$, \ie, to message histories. Notice that, by construction of the message set $\mc S_i$, $\mc Y$ contains a copy of each $\mc X_i$ within it, and that direct mediators $M_i^{\vec x_i}$ constrain themselves to states within this copy of $\mc X_i$ by terminating the interaction (sending $\bot$ forever) if the history of communication fails to match a state in player $i$'s decision problem. We will use this fact later.

    By \Cref{th:general pair}, each player's strategy set $\Phi_i$ is the set of linear maps $\mc Y_i \to \co \mc X_i$. Now, consider the direct profile $(\pi^*, \phi_i^*, \dots, \phi_n^*)$ where $\pi^* \in \Delta(\mc M_1^* \times \dots \times \mc M_n^*)$ is given by sampling $(M_1, \dots, M_n) \sim \pi$, sampling $\vec x_i \in \mc X_i$ from any distribution whose expectation is $\phi_i(M_i)$ for every player $i$, and finally outputting $(M_1^{\vec x_1}, \dots, M_n^{\vec x_n})$. Clearly, this profile is outcome-equivalent to the original profile, so it only remains to show that it is also an equilibrium. Consider any deviation $\phi'_i$ of player $i$ from the direct equilibrium.

    We proceed by contrapositive. Suppose that $(\pi^*, \phi_i^*, \dots, \phi_n^*)$ is not an equilibrium: player $i$ has profitable deviation $\phi'_i$. Since a direct mediator is constrained, as above, to act within player $i$'s decision problem, $\phi'_i$ can be expressed as a UTC deviation $\phi'_i : \mc X_i \to \co \mc X_i$. Since all UTC deviations are linear, $\phi'_i$ is itself linear, and can also be extended to a function $\phi'_i : \co \mc X_i \to \co \mc X_i$. Now let $\psi_i : \mc Y_i \to \co \mc X_i$ be given by $\psi_i = \phi'_i \circ \phi_i$, and observe that, since the composition of linear functions is linear, $\psi_i$ is a linear map, that is, $\psi_i \in \Phi_i$. Moreover, by construction, the profiles $(\pi, \psi_i, \phi_{-i}$) and $(\pi^*, \phi_i', \phi_{-i}^*)$ must induce the same outcome distributions---and therefore, $\psi_i$ is a profitable deviation against the original equilibrium $(\pi, \phi_1, \dots, \phi_n)$.
\end{proof}
This result justifies the definitions of equilibrium we have been using throughout the paper before reaching this point. We remark that, although the proof is usually not difficult, the revelation principle is not a given or automatic fact that can be assumed without proof: there are other settings where it fails, such as when the deviator's set of allowable messages depends on its true type in a nontrivial manner (\eg, \citealp{Forges05:Communication,Kephart21:Revelation}).

\section{Previous $\Phi$-Regret Algorithms}\label{sec:related}

\begin{table}[H]
    \renewcommand{\arraystretch}{1.3}
    \centering
    \scalebox{1}{
    \begin{tabular}{llccl}
    \toprule
       \bf Citation  & \bf Deviation set $(\Phi)$ & \bf Regret bound & \bf Complexity & \bf CFR? \\ 
    \midrule
        \citet{Zinkevich07:Regret} & External & $O(d \sqrt{T})$ & $O(d)$ & Yes \\
        \citet{Farina22:Kernelized} & External & $O(\sqrt{dT})$ & $O(d)$ & No \\
        \citet{Farina22:Simple} & Trigger & $O(d \sqrt{T})$ & FP & Yes\\
        \citet{Fujii23:Bayes}\,$^\ddagger$ & Communication & $\tilde O(\sqrt{dT})$ & FP & No \\
        \citet{Farina23:Polynomial} & Linear & $O(d^2 \sqrt{T})$ & QP & No \\
        {\bf This paper} & Linear & $O(d^2 \sqrt{T})$ & FP & Yes \\[1mm]
        \parbox{4cm}{\citet{Peng23:Fast}\\\citet{Dagan23:External}} & Swap & $T \cdot \tilde O\qty(\frac{\log d}{\log T})$ & $O(d \log T)$ & Yes\\
    \bottomrule
    \end{tabular}
    }
    \caption{Comparison of $\Phi$-regret minimizing algorithms for extensive-form games. {\bf Complexity} is per-iteration. ``QP'' and ``FP'' denote solving a quadratic program and a fixed-point problem, respectively.   {\bf CFR?} denotes whether the algorithm is based on the CFR framework, which is important in practice because, as stated in the body, CFR-based methods are the best practical regret minimizers. $^\ddagger$: The algorithm and analysis of \citet{Fujii23:Bayes} only applies to single-step Bayesian games, not general extensive-form games.}
    \label{tab:phi-regret-algs}
\end{table}

\changed{ \Cref{tab:phi-regret-algs} summarizes the various $\Phi$-regret algorithms known for extensive-form games. As suggested by the table, the main improvements of our algorithm compared to that of \citet{Farina23:Polynomial} in the same setting are (1) the faster per-iteration complexity (fixed-point computation versus quadratic program), and (2) the enabling of the use of the CFR framework, which leads to faster practical performance. The remarkable result of \citet{Peng23:Fast} and \citet{Dagan23:External} is very recent and in parallel with our work, and shows that there exists a PTAS for NFCE (with runtime roughly $d^{\tilde O(1/\varepsilon)}$). It remains an open problem  whether there exists a swap-regret minimizing algorithm whose regret  is $\poly(d) \cdot T^c$ for some $c < 1$, which would imply an FPTAS for NFCE in extensive-form games.}

\section{Another Example}\label{sec:more-examples}

\begin{figure*}
    \tikzset{
        every path/.style={-},
        every node/.style={draw},
        infoset1/.style={-, dotted, ultra thick, color=p1color},
        infoset2/.style={infoset1, color=p2color},
        terminal/.style={},
      ilabel/.style={fill=white, inner sep=0pt, draw=none},
    }
    \forestset{
            default preamble={for tree={
            parent anchor=south, child anchor=north, anchor=center, s sep=14pt, l=36pt
    }},
      p1/.style={
          regular polygon,
          regular polygon sides=3,
          inner sep=2pt, fill=p1color, draw=none},
      p2/.style={p1, shape border rotate=180, fill=p2color},
      parent/.style={no edge,tikz={\draw (#1.parent anchor) to (!.child anchor);}},
      parentd/.style n args={2}{no edge,tikz={\draw[ultra thick, p#2color] (#1.parent anchor) to (!.child anchor); }},
      nat/.style={},
      terminal/.style={draw=none, inner sep=2pt},
      el/.style n args={3}{edge label={node[midway, fill=white, inner sep=1pt, draw=none, font=\sf\footnotesize] {\action{#1}{#2}{#3}}}},
      d/.style={edge={ultra thick, draw={p#1color}}},
      comment/.style={no edge, draw=none, align=center, font=\tiny\sf},
  }
  \begin{center}
  \begin{forest}
      [,p1,label=right:{\util1{A}}
        [,p1,label=right:{\util1{B}},el={1}{a}{1}
            [,p2,name=D1,tier=1,el={1}{b}{1}
                [\util1{0},terminal,el={2}{c}{1}]
                [\util1{0},terminal,el={2}{c}{2}]
            ]
            [,p2,name=D2,tier=1,el={1}{b}{2}
                [\util1{0},terminal,el={2}{c}{1}]
                [\util1{0},terminal,el={2}{c}{2}]
            ]
        ]
        [,p2,name=D3,tier=1,el={1}{a}{2}
            [\util1{1},terminal,el={2}{c}{1}]
            [\util1{--1},terminal,el={2}{c}{2}]
        ]
        [,p2,name=D4,tier=1,el={1}{a}{3}
            [\util1{--1},terminal,el={2}{c}{1}]
            [\util1{1},terminal,el={2}{c}{2}]
        ]
      ]
      \draw[infoset2] (D1) -- node[midway, ilabel]{\util2{C}}(D4);
  \end{forest}
  \end{center}
  \caption{Another example. The notation is shared with \Cref{fig:example}. In this example, \pone's strategy set is equivalent to a simplex, so the linear deviations coincide with its swap deviations. As such, we will not bother to depict the UTC decision problem or matrices.}\label{fig:example2}
\end{figure*}
In this section, we provide another example of untimed communication deviations, especially as they differ from (timed) communication deviations. Consider the game in \Cref{fig:example2}. Consider the correlated profile that mixes uniformly between the pure profiles (\action{1}{a}{1}, \action{1}{b}{1}, \action{2}{c}{1}) and (\action{1}{a}{1}, \action{1}{b}{2}, \action{2}{c}{2}). This is a communication equilibrium: \pone cannot profitably deviate, because its utility in the \util{1}{B} subgame is always 0, and if it chooses to disobey the recommendation \action{1}{a}{1} its expected utility will be also 0, because it cannot ask for another recommendation before choosing what action to play. However, \pone has the following profitable UTC deviation: ask for the recommendation at \util{1}{B} {\em before} deciding which action to play at \util{1}{A}. If the recommendation is \action{1}{b}{1}, play \action{1}{a}{2}; if the recommendation is \action{1}{b}{2}, play \action{1}{a}{3}.

Notice that, in this example, \pone's decision problem is essentially that of a normal-form game; therefore, its linear deviations coincide with its swap deviations. However, due to the timing restriction on communication deviations, the communication deviations are more restricted than the swap deviations.

This example also shows that the untimed private communication equilibria (see \Cref{sec:revelation}) are not outcome-equivalent to the timed private communication equilibria: in this game, every correlated profile is a distribution over terminal nodes (outcomes), so the fact that there exists a private communication equilibrium with a profitable UTC deviation is enough to disprove outcome equivalence.

\section{Proof of \Cref{th:main}}\label{app:proofs}
\thMain*
We start with a lemma.

\begin{lemma}\label{lem:representation}
    Let $f : \mc X \to \R_{\ge 0}$ be a linear map, where $\mc X$ is a sequence-form strategy space. Then there exists a unique vector $\vec c$ such that:
    \begin{enumerate}
        \item $f(\vec x) = \vec c^\top \vec x$ for all $\vec x \in \mc X$,
        \item $\vec c$ has all nonnegative entries, and
        \item for every decision point $I$, there is at least one action $a$ such that $\vec c(ja) = 0$.
    \end{enumerate}
\end{lemma}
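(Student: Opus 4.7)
The plan is to prove existence by a bottom-up normalization procedure and uniqueness via a small argument on the kernel of representations.

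For existence, I would start from any $\vec c$ with $f(\vec x) = \vec c^\top \vec x$ on $\mc X$ (which exists since $f$ is linear and $\vec x(\Root)=1$ lets one absorb any affine part). Two such representations differ by an element in $\op{span}\{\vec e_{p_j} - \sum_{a \in A_j}\vec e_{ja}\}_{j \in \mc J}$, because the affine hull of $\mc X$ is cut out by these equations together with $\vec x(\Root) = 1$, and a short calculation shows $\vec e_\Root$ itself is \emph{not} in the annihilator of $\mc X$ (it always evaluates to $1$ on $\mc X$). I would then process decision points in a bottom-up order; at each $j$, I set $\lambda_j := \min_{a \in A_j} \vec c(ja)$ (using the current value of $\vec c$) and perform the same-function update
\begin{align*}
\vec c \;\leftarrow\; \vec c + \lambda_j\bigl(\vec e_{p_j} - \sum_{a \in A_j} \vec e_{ja}\bigr).
\end{align*}
This decreases every $\vec c(ja)$ by $\lambda_j$ (so afterwards they are all nonnegative with at least one equal to zero) and increases $\vec c(p_j)$ by $\lambda_j$. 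Since $\vec c(ja)$ is only modified while processing $j$ or a descendant of $j$, the bottom-up order preserves these properties, yielding condition 3 globally and condition 2 at every non-root coordinate.

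The only subtle step is showing $\vec c(\Root) \ge 0$ at the end. Unwinding the procedure, $\vec c(\Root) = \vec c_0(\Root) + \sum_{j \in C_\Root} \lambda_j$, and by induction on depth one sees $\lambda_j = \min_a M(ja)$, where $M$ is the tree min-value recursion
\begin{align*}
M(\sigma) := \vec c_0(\sigma) + \sum_{j \in C_\sigma} \min_{a \in A_j} M(ja)
\end{align*}
(with $M(\sigma) = \vec c_0(\sigma)$ at terminals). Thus the final $\vec c(\Root)$ equals $M(\Root)$, which a second induction identifies with $\min_{\vec x \in \mc X} \vec c_0^\top \vec x = \min f \ge 0$.

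For uniqueness, suppose $\vec c, \vec c'$ both satisfy conditions 1--3. By the earlier observation, their difference decomposes as $\vec c - \vec c' = \sum_{j \in \mc J} \mu_j (\vec e_{p_j} - \sum_a \vec e_{ja})$ for some scalars $\mu_j$. Suppose for contradiction that some $\mu_j \neq 0$, and pick $j^*$ with $\mu_{j^*} \neq 0$ while $\mu_{j''} = 0$ for every $j''$ strictly below $j^*$ in the tree (possible by finiteness). Then for every $a \in A_{j^*}$ only the $j^*$-term survives, giving $(\vec c - \vec c')(j^* a) = -\mu_{j^*}$. Choosing $a_1 \in A_{j^*}$ with $\vec c(j^* a_1) = 0$ (condition 3 for $\vec c$) gives $\mu_{j^*} = \vec c'(j^* a_1) \ge 0$, while choosing $a_2$ with $\vec c'(j^* a_2) = 0$ gives $-\mu_{j^*} = \vec c(j^* a_2) \ge 0$. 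Hence $\mu_{j^*} = 0$, a contradiction, so all $\mu_j$ vanish and $\vec c = \vec c'$.

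The main obstacle is the existence direction, specifically identifying the accumulated offsets at the root with the tree's min-value $M(\Root)$; once that identification is made, nonnegativity at the root follows immediately from the hypothesis $f \ge 0$.
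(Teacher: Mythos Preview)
Your proof is correct and follows essentially the same bottom-up normalization as the paper. Two small differences are worth noting. For $\vec c(\Root)\ge 0$, the paper bypasses your recursion $M(\cdot)$ with a one-line argument: once the normalization is done, let $\vec x$ be the pure strategy that selects the zeroing action at every decision point; then $\vec c^\top\vec x = \vec c(\Root)$ because every other coordinate of $\vec c$ touched by $\vec x$ is zero by construction, and $\vec c^\top\vec x = f(\vec x)\ge 0$. This is the same fact you establish (your $M(\Root)=\min f$ is witnessed by exactly this $\vec x$), just reached without unwinding the procedure. Conversely, your uniqueness argument via the explicit kernel decomposition and a minimal-$j^*$ contradiction is more detailed than the paper's, which simply observes that the transformation at each decision point is forced; your version makes the underlying reason (the only annihilator generators touching the coordinates $j^*a$ come from $j^*$ and its descendants) fully explicit.
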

\begin{proof}
    Let $f(\vec x) = \vec c^\top \vec x$, where $\vec c$ is currently arbitrary (\ie, it may not satisfy (2) and (3)). Then, for each decision point $j$ in bottom-up order, let $\vec c^*(j) := \min_a \vec c(ja)$. Subtract $\vec c^*(j)$ from $\vec c(ja)$ for every action $a$, and add $\vec c^*(j)$ to $\vec c(p_j)$. Since $\vec x$ satisfies the constraint $\vec x(p_j) = \sum_a x(ja)$, this does not change the validity of $\vec c$, and by the end of the algorithm, (2) and (3) will be satisfied except possibly that $\vec c(\Root) \ge 0$. To see that $\vec c(\Root) \ge 0$, let $\vec x$ be the pure strategy that plays the zeroing action $a$ specified by (3) at every decision point. Then, by construction, $\vec c^\top \vec x = \vec c(\Root) \ge 0$. To see that $\vec c$ is unique, note that there was no choice at any step in the above process: the transformation performed at each decision point is the only way to satisfy conditions (2) and (3) without changing the linear map.
\end{proof}
Now let $\mat A$ represent a linear map $\mc X \to \mc X$, where the rows of $\mat A$ are represented according to the above lemma. That is, $\mat A$ has all nonnegative entries, and moreover for any $\tilde\jmath \in \mc J$ and $\sigma \in \Sigma$, we have  $\mat A(\sigma, \tilde\jmath a) = 0$ for some action $a$. It remains only to show:
\begin{lemma}
    There exists a matrix $\mat B$ such that $(\mat A, \mat B)$ satisfies all constraints in the constraint system \eqref{eq:constraint system}.
\end{lemma}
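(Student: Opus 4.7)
The plan is to identify the constraint on $\mat B$ as the statement that a certain linear functional has a particular nonnegative representation in a distinguished basis. Introduce the shorthand $g_j(\tilde\sigma) := \mat A(p_j, \tilde\sigma) - \sum_{a \in A_j} \mat A(ja, \tilde\sigma)$ for every $\tilde\sigma \in \Sigma$. Rearranging the main equality of \eqref{eq:constraint system}, a nonnegative matrix $\mat B$ satisfies the system if and only if, for each fixed $j$, the row $\mat B(j, \cdot)$ provides nonnegative scalars $\beta_{j, \tilde\jmath}$ with
\begin{equation*}
g_j(\tilde\sigma) \;=\; \sum_{\tilde\jmath \in C_{\tilde\sigma}} \beta_{j, \tilde\jmath} \;-\; \beta_{j, p_{\tilde\sigma}} \qquad \forall\, \tilde\sigma \in \Sigma,
\end{equation*}
with the convention $\beta_{j, p_\Root} := 0$.

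The first step is to observe that $g_j$ vanishes on $\mc X$: for any $\vec x \in \mc X$, $g_j^\top \vec x = (\mat A \vec x)(p_j) - \sum_{a \in A_j} (\mat A \vec x)(ja) = 0$, because $\mat A \vec x \in \co \mc X$ obeys the sequence-form constraint at $j$. The annihilator of the affine hull of $\mc X$ in $\R^\Sigma$ is spanned by $e_\Root$ together with the linearly independent family $\{e_{p_{\tilde\jmath}} - \sum_{a \in A_{\tilde\jmath}} e_{\tilde\jmath a}\}_{\tilde\jmath \in \mc J}$. Since $g_j$ actually vanishes (rather than being a nonzero constant) on $\mc X$, the $e_\Root$ coefficient must be zero, so there exist unique scalars $\beta_{j, \tilde\jmath}$ with $g_j = \sum_{\tilde\jmath} \beta_{j, \tilde\jmath}(e_{p_{\tilde\jmath}} - \sum_a e_{\tilde\jmath a})$; expanding entry by entry recovers the displayed identity. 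Setting $\mat B(j, \tilde\jmath) := \beta_{j, \tilde\jmath}$ thus satisfies the equality constraints of \eqref{eq:constraint system} by construction.

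The main obstacle is showing $\beta_{j, \tilde\jmath} \ge 0$, and here the canonical form of $\mat A$ (from the preceding lemma) is essential. I would induct on $\tilde\jmath$ from the leaves of the mediator tree upward. Fix $\tilde\jmath$ and use the canonical form of the $p_j$-th row of $\mat A$ to pick an action $a^* \in A_{\tilde\jmath}$ with $\mat A(p_j, \tilde\jmath a^*) = 0$. Specializing the defining identity to $\tilde\sigma = \tilde\jmath a^*$ (so that $p_{\tilde\jmath a^*} = \tilde\jmath$) and solving for $\beta_{j, \tilde\jmath}$ yields
\begin{equation*}
\beta_{j, \tilde\jmath} \;=\; \sum_{\tilde\jmath' \in C_{\tilde\jmath a^*}} \beta_{j, \tilde\jmath'} \;-\; g_j(\tilde\jmath a^*) \;=\; \sum_{\tilde\jmath' \in C_{\tilde\jmath a^*}} \beta_{j, \tilde\jmath'} \;+\; \sum_{a' \in A_j} \mat A(ja', \tilde\jmath a^*) \;\ge\; 0,
\end{equation*}
where the first sum is nonnegative by the inductive hypothesis (its terms are strictly deeper in the mediator tree) and the second is nonnegative because $\mat A \ge 0$ under the canonical form. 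The base case, when $\tilde\jmath$ is a leaf of the mediator tree, is automatic since the first sum is then empty.

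The heart of the argument is the clever use of $a^*$: the canonical form converts the a priori signed quantity $g_j(\tilde\jmath a^*)$ into a sum of nonpositive terms, so that negating it produces the needed nonnegative contribution. Without leveraging the zero-action structure from the preceding lemma the $\beta_{j, \tilde\jmath}$ could be negative in principle, so this inductive push from deep in the mediator tree back up to the root is where the real work lies.
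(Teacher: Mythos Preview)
Your proof is correct and shares the same core idea as the paper's, but you package the equality-constraint step more cleanly. The paper defines $\mat B(j,\tilde\jmath) := \min_a \mat{\tilde B}(j,\tilde\jmath a)$ via a bottom-up recursion and then uses a fully-mixed strategy $\vec x$ to show the minimum is actually attained at \emph{every} action $a$ (hence the equality constraints hold). You instead observe directly that $g_j$ vanishes on $\mc X$, hence lies in the row span of the sequence-form constraint matrix, and that the coefficient of $e_\Root$ must be zero; this yields the unique $\beta_{j,\tilde\jmath}$ without any auxiliary ``min'' construction or mixed-strategy trick. The nonnegativity argument is then essentially identical in both proofs: induction from the leaves of the mediator tree upward, using the zero-action property of the canonical row $\mat A(p_j,\cdot)$ at the critical step (the paper phrases this as a contradiction at the deepest violating pair, which is just the contrapositive of your induction).

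One minor omission: the constraint system \eqref{eq:constraint system} also includes $\mat A(\Root,\Root)=1$, $\mat A(\Root,\tilde\sigma)=0$ for $\tilde\sigma\neq\Root$, and $\mat A\ge 0$. You implicitly rely on these being forced by the canonical form of $\mat A$ (the $\Root$-row represents the constant-$1$ map, whose unique canonical representation is $e_\Root^\top$), and it would be worth saying so in one sentence, as the paper does.
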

\begin{proof}
    $\mat A(\Root, \Root) = 1$ and $\mat A(\Root, \tilde\sigma) = 0$ for $\tilde\sigma \ne \Root$ follow from the fact that $(\mat A \vec x)(\Root) = 1$ for all $\vec x$, that is $\mat A(\Root, \cdot) : \mc X \to [0, 1]$ is the identically-$1$ function, which by \Cref{lem:representation} has the above form. We are thus left with the main constraint,
    \begin{align}\label{cons:main}
        \mat A(p_j, \tilde{\sigma}) + \mat B(j, p_{\tilde \sigma}) & = \sum_{a \in A_j} \mat A(ja, \tilde\sigma) + \sum_{\tilde\jmath \in C_{\tilde\sigma}} \mat B(j, \tilde\jmath)
    \end{align}
    for every $(j, \tilde \sigma) \in \mc J \times \Sigma$.

    Define $\mat B$ and another matrix $\mat{\tilde B} \in \R^{\mc J\times \Sigma}$ as follows:
    \begin{align}
        \mat{\tilde B}(j, \tilde\sigma) & = \sum_{a \in A_j} \mat A(ja, \tilde\sigma) + \sum_{\tilde\jmath \in C_{\tilde\sigma}} \mat B(j, \tilde\jmath) - \mat A(p_j, \tilde\sigma) \\
        \mat B(j, \tilde\jmath)         & = \min_{a \in A_{\tilde\jmath}} \mat{\tilde B}(j, \tilde\jmath a)
    \end{align}
    To see that $\mat B$ satisfies all the constraints \eqref{cons:main}, let $\vec x$ be any fully-mixed strategy, and $I$ be any decision point. Then:
    \begin{align}
        0 & = \sum_{a \in A_j} (\mat A \vec x)(ja) - (\mat A \vec x)(p_j)
        \\&= \sum_{\tilde\sigma \in \Sigma} \vec x(\tilde\sigma) \qty(\sum_{a \in A_j} \mat A(ja, \tilde\sigma) - \mat A(p_j, \tilde\sigma))
        \\&= \sum_{\tilde\sigma \in \Sigma}\vec x(\tilde\sigma) \qty(\mat{\tilde B}(j, \tilde\sigma) - \sum_{\tilde\jmath \in C_{\tilde\sigma}} \mat B(j, \tilde\jmath))
        \\&= \sum_{\tilde\jmath}\qty(\sum_{a \in A_{\tilde\jmath}} \vec x(\tilde\jmath a) \mat{\tilde B}(j, \tilde\jmath a) - \vec x(p_{\tilde\jmath})\mat B(j, \tilde\jmath))
        \\&\ge \sum_{\tilde\jmath}\mat B(j, \tilde\jmath) \qty(\sum_{a \in A_{\tilde\jmath}} \vec x(\tilde\jmath a) - \vec x(p_{\tilde\jmath})) = 0
    \end{align}
    Thus, the inequality must in fact be an equality, and since all its terms are nonnegative, we thus have $\mat{\tilde B}(j, \tilde\jmath a) = \mat B(j, \tilde\jmath)$ for all $a \in A_{\tilde\jmath}$, so the constraints \eqref{cons:main} are satisfied by definition of $\mat{\tilde B}$.

    To see that $\mat B \ge 0$, suppose not. Let $(j, \tilde\jmath)$ be a last (\ie, farthest from the root, with respect to the ordering of the DAG) pair for which $\mat B(j, \tilde\jmath) < 0$. Then, for any action $\tilde a \in A_{\tilde\jmath}$, we have
    \begin{align}
        \mat A(p_j, \tilde\jmath \tilde a) + \mat B(j, \tilde\jmath) & = \sum_{a \in A_j} \mat A(ja, \tilde\jmath \tilde a) + \sum_{\tilde\jmath' \in C_{\tilde\jmath \tilde a}} \mat B(j, \tilde\jmath') \ge 0
    \end{align}
    where the inequality is because $(j, \tilde\jmath)$ is farthest from the root so all the terms on the right-hand side are nonnegative. Therefore, $\mat A(p_j, \tilde\jmath \tilde a) > 0$. But this should hold for every action $\tilde a$, contradicting the construction of $\mat A$, which includes the condition that there must exist a $\tilde a$ for which $\mat A(p_j, \tilde\jmath \tilde a) = 0$.
\end{proof}

\section{Additional Plots}\label{app:plots}

Below we present a variant of the plots in \Cref{fig:experiments}, in which \emph{time} (and not iterations) is reported on the x-axis.

\begin{figure}[H]


    \resizebox{.5\textwidth}{!}{\import{plots}{K45_time.pgf}}%
    \resizebox{.5\textwidth}{!}{\import{plots}{U222_time.pgf}}\\[-4mm]
    \resizebox{.5\textwidth}{!}{\import{plots}{L3132_time.pgf}}%
    \resizebox{.5\textwidth}{!}{\import{plots}{S2123_time.pgf}}\\[-1cm]
    \caption{Experimental comparison between our dynamics and those of \citet{Farina23:Polynomial} for approximating a linear correlated equilibrium in extensive-form games. Each algorithm was run for a maximum of $100,\!000$ iterations or 6 hours, whichever was hit first. Runs that were terminated due to the time limit are marked with a square {\small $\blacksquare$}. Compared to \Cref{fig:experiments}, the plots in this figure have \emph{time} on the x-axis.}
    \label{fig:experiments xtime}
\end{figure}

\end{document}